\theoremstyle{break}        
\newtheorem{Proposition}{Proposition}
\definecolor{myred}{RGB}{232,102,102}
\definecolor{myblue}{RGB}{187,187,255}
\definecolor{myorange0}{RGB}{252,226,5}
\definecolor{myorange0c}{RGB}{255,255,255}
\definecolor{myorange}{RGB}{255,165,0}
\definecolor{mygrey}{RGB}{105,105,105}
\definecolor{OliveGreen}{RGB}{85,107,47}
\definecolor{NavyBlue}{RGB}{0,0,128}
\definecolor{mygreen}{RGB}{34,139,34}
\definecolor{myY}{RGB}{220,255,203}
\definecolor{myYO}{RGB}{255, 220, 151}
\definecolor{mygreenc}{RGB}{150,50,50}
\newcommand{\be}{\begin{equation}}
\newcommand{\ee}{\end{equation}}
\newcommand{\ba}{\begin{aligned}}
\newcommand{\ea}{\end{aligned}}
\newcommand{\bw}{\begin{widetext}}
\newcommand{\ew}{\end{widetext}}
\newcommand{\1}{\mathbbm{1}}
\theoremstyle{plain}
\theoremstyle{plain}
\theoremstyle{plain}
\newcommand{\Wgategreen}[2]{
\draw[very thick] (#1-0.5, #2 +0.5) -- (#1+0.5,#2-0.5);
\draw[very thick] (#1-0.5,#2-0.5) -- (#1+0.5,#2+0.5);
\draw[ thick, fill=mygreen, rounded corners=2pt] (#1-0.25,#2+0.25) rectangle (#1+0.25,#2-0.25);
\draw[thick] (#1,#2+0.15) -- (#1+0.15,#2+0.15) -- (#1+0.15,#2);
}
\newcommand{\MYcircle}[2]{
\draw[thick, fill=white] (#1,#2) circle (0.1cm); }
\newcommand{\MYsquare}[2]{
 \coordinate (Origin) at (#1,#2);
\filldraw [thick, fill=white, even odd rule] ($(Origin)+(-.1cm,-.1cm)$) coordinate (Square) -- ++(0.0cm,0.2cm) -- ++(0.2cm,0.0cm) -- ++(0.0cm,-0.2cm) -- cycle;
 }
\newcommand{\MYtriangle}[2]{
 \coordinate (Origin) at (#1,#2);
\filldraw [thick, fill=white, even odd rule] ($(Origin)+(-.0cm,{0.666*cos(60)*0.3cm})$) coordinate (Triangle) -- ++(0.15cm,-{cos(60)*0.3cm}) -- ++(-0.3cm,0.0cm) -- ++(0.15cm,{cos(60)*0.3cm}) -- cycle;
}
\newcommand{\MYcircleB}[2]{
\draw[thick, fill=black] (#1,#2) circle (0.1cm); }
\newcommand{\rhoO}[2]{
\draw[very thick] (-.5+#1,0.5+#2) -- (#1,0+#2);
\draw[very thick] (#1,0+#2) -- (.5+#1,0.5+#2);
\draw[very thick] (-.5+#1,#2) -- (.5+#1,#2);
\draw[ thick, fill=mygreen, rounded corners=2pt] (-0.35+#1,0.2-0.25+#2) rectangle (0.35+#1,0.2+0.2+#2);
\draw[very thick] (0.1+#1,0.15+.18+#2)-- (.15+0.1+#1,0.15+.18+#2) -- (.15+0.1+#1,0+.18+#2);
}
\newcommand{\mcirc}{\mathbin{\scalerel*{\fullmoon}{G}}}
\newcommand{\msquare}{\mathord{\scalerel*{\Box}{G}}}
\newcommand{\tr}{\text{tr} \, }
\newcommand{\Tr}{\text{Tr} \, }
\begin{document}

\title{Circuits of space and time quantum channels} 

\author{Pavel Kos}%
\orcid{0000-0002-7865-8609}
\thanks{Equal contribution.}
\author{Georgios Styliaris}%

\affiliation{%
Max-Planck-Institut f{\"{u}}r Quantenoptik, Hans-Kopfermann-Str. 1, 85748 Garching, Germany
}%
\affiliation{%
Munich Center for Quantum Science and Technology (MCQST), Schellingstr. 4, 80799 M{\"{u}}nchen, Germany
}%
\thanks{Equal contribution.}

\begin{abstract}

Exact solutions in interacting many-body systems are scarce but extremely valuable since they provide insights into the dynamics. Dual-unitary models are examples in one spatial dimension where this is possible. These brick-wall quantum circuits consist of local gates, which remain unitary not only in time, but also when interpreted as evolutions along the spatial directions. However, this setting of unitary dynamics does not directly apply to real-world systems due to their imperfect isolation, and it is thus imperative to consider the impact of noise to dual-unitary dynamics and its exact solvability.

In this work we generalise the ideas of dual-unitarity to obtain exact solutions in noisy quantum circuits, where each unitary gate is substituted by a local quantum channel. Exact solutions are obtained by demanding that the noisy gates yield a valid quantum channel not only in time, but also when interpreted as evolutions along one or both of the spatial directions and possibly backwards in time. This gives rise to new families of models that satisfy different combinations of unitality constraints along the space and time directions. We provide exact solutions for the spatio-temporal correlation functions, spatial correlations after a quantum quench, and the structure of steady states for these families of models. We show that noise unbiased around the dual-unitary family leads to exactly solvable models, even if dual-unitarity is strongly violated. We prove that any channel unital in both space and time directions can be written as an affine combination of a particular class of dual-unitary gates. Finally, we extend the definition of solvable initial states to matrix-product density operators. We completely classify them when their tensor admits a local purification.

\end{abstract}
\maketitle

\section{Introduction}

One of the key goals of studying non-equilibrium many-body quantum dynamics is to compute its correlation functions. Even local correlations alone typically contain enough information to capture the main physical behaviour of the system. In particular, they can determine the transport coefficients and can often be directly measured in a lab.

The problem is that correlations are usually impossible to compute exactly in generic many-body locally interacting systems, which is true for both closed or open dynamics (i.e., when the system is coupled to an environment). Although they are accessible in non-interacting (Gaussian) theories, the latter are, however, non-generic.
In closed systems, we have witnessed some exact results in random unitary circuits~\cite{nahum2017quantum,nahum2018operator,von2018operator,rakovszky2019sub,chan2018solution,garratt2021local}. 
Similarly to closed system dynamics, exact solutions are rare also in the realm of open quantum dynamics. Among the examples are generalisations of the non-interacting models, which include specific linear dissipation~\cite{prosen2008third}, integrable open systems~\cite{vanicat2018integrable,sa2021integrable, lei2022integrable}, spectral properties of random noisy quantum circuit~\cite{sa2020spectral}, and other specific solutions~\cite{znidaric2010exact}.

Recently, a new class of unitary models with interactions proved to have accessible correlation functions. These \emph{dual-unitary}~\cite{bertini2019exact} quantum circuits are characterised by the fact that the local gates (and full dynamics) remain unitary upon switching the roles of space and time. These models have accessible correlation functions~\cite{bertini2019exact,piroli2020exact} and can act as a starting point for perturbative calculations~\cite{kos2021correlations}. Generically they are provably chaotic~\cite{bertini2018exact,bertini2021random}, but contain also integrable and free examples. 
They have also analytically accessible entanglement entropy and operator spreading~\cite{bertini2019entanglement, gopalakrishnan2019unitary, piroli2020exact,claeys2020maximum,bertini2020scrambling}, as well as local operator entanglement entropy~\cite{bertini2020operator}. The structure of dual-unitary gates has been completely characterised for gates acting on two qubits~\cite{bertini2019exact}.
Beyond the qubit case, however, only certain non-exhaustive constructions are known~\cite{rather2020creating, gutkin2020exact, claeys2021ergodic, aravinda2021from,prosen2021manybody, marton2022remarks}, which exhibit a palette of rich behaviour.

Dual-unitary gates have also been used to prove deep thermalization through emergent state designs~\cite{ho2022exact,claeys2022emergent,ippoliti2022dynamical}, study eigenstate thermalization~\cite{fritzsch2021eigenstate},
temporal entanglement~\cite{lerose2020influence}, and aspects of their computational power have been characterised~\cite{suzuki2021computational}. They were also generalised to the case of having three  directions~\cite{jonay2021triunitary, ternary2022milbradt}, as well as utilised in connections with measurement induced phase transitions~\cite{ippoliti2021postselection, ippoliti2021fractal,lu2021spacetime}. Due to their important role, they have already been realised in experimental setups~\cite{chertkov2021holographic, mi2021information}.

Real-world experiments, especially in the era of NISQ devices~\cite{preskill2018quantum}, are never totally isolated from their surroundings. It is therefore crucial to understand how the coupling to the environment, which introduces noise and errors, influences the dynamics. As interactions are constrained by locality, quantum dynamics composed of \textit{local and noisy gates} comprise a very general framework beyond unitarity for the description of the complex dynamics exhibited by noisy quantum devices. In particular, it is imperative to examine if and how a plethora of recent results about many-body unitary circuits dynamics~\cite{nahum2017quantum,nahum2018operator,von2018operator,rakovszky2019sub,chan2018solution,garratt2021local,bertini2019exact,bertini2019entanglement, gopalakrishnan2019unitary, piroli2020exact,claeys2020maximum,bertini2020scrambling} are altered in the presence of local noise.
In general the open dynamics is even harder to understand than the unitary dynamics. For instance, even storing a full density matrix describing $L$ qubits requires memory scaling as $2^{2L}$ instead of $2^L$ for their wave function.

Experiments performed in~\cite{chertkov2021holographic, mi2021information}
show remarkable agreement with dual-unitary dynamics, even with inevitable imperfections. This raises several questions. For instance, does the open dynamics preserve the typical features of dual-unitarity, such as the strict maximal Lieb-Robinson velocity \cite{claeys2020maximum} and the zero correlations inside the light cone \cite{bertini2019exact,claeys2020maximum}? 
Is there some robustness to noise, which maintain dual-unitary features even-though the particular gates are not exactly dual unitary? As we show, the answer is affirmative, if the errors are unbiased with respect to the dual-unitarity family.

More broadly, it is natural to wonder to what extent the exact solvability implied by dual-unitary dynamics can be adapted to the domain of open systems. First steps in this direction were made via averaging random dual-unitary evolutions~\cite{kos2021correlations,kos2021chaos}, which were shown to correspond to certain classical Markov chains. In this context, it has been pointed out that the so-called dual unitality suffices to calculate their spatiotemporal correlations exactly.

Nevertheless, a systematic identification and treatment of the counterparts of dual-unitary dynamics for \emph{open systems} is lacking. Beyond isolated examples, it is not evident what dynamical quantities of the evolution are amenable to analytical treatment, and what are the features of the evolution itself that make the resulting dynamics solvable.

In this work we establish a framework to address these questions. We first introduce distinct classes of quantum gates for open systems, which satisfy a unitality property along different combinations of space and time directions. This property generalises dual-unitarity to noisy dynamics, so we refer to such open system evolution as \textit{space-time unital}. Unlike the dual-unitary case, unitality for noisy dynamics might hold in only one of the two space directions and/or in the ordinary time direction, as well as in diagonal combinations, giving rise to different classes of dynamics.

Subsequently, we demonstrate how the introduced space-time unital models lead to exact solvability in the realm of noisy quantum circuits, generalising dual-unitarity.
In particular, we consider quantum circuits in one spatial dimension with dynamics consisting of a brick-wall architecture of space-time unital gates. We investigate the two-point spatio-temporal correlations, the two-point spatial correlations after a quench, as well as the fixed points of the dynamics. We determine the reductions that occur in the calculations of the correlation functions as a result of different kinds of space-time unitality. Importantly, we identify broad families of models resulting in exact solutions that can be obtained efficiently (in the number of time steps).

Moreover, we identify relevant instances which produce dynamics belonging in the aforementioned classes. Most importantly, we show that an unbiased averaging of \textit{non-dual-unitary} gates around dual-unitary points produces space-time unital evolutions. This provides a sufficient mechanism where noise induces solvable dynamics. It is also relevant in the context of experimental implementations of dual-unitary circuits, showing that imperfections around dual-unitarity (which inevitable occur) can lead to exact solvability, even if dual-unitarity is violated.

Solvability is a property that depends also on the structure of the initial state. For the pure state case with translational invariance, the so-called solvable initial states yield analytical tractability of the correlation functions for dual-unitary gates. Moreover, the solvable initial states have been characterised, and are essentially in one-to-one correspondence with unitary operator~\cite{piroli2020exact}. Here we extend the definition and the characterisation of solvable translation-invariant mixed states for the case of dynamics described by space-time unital quantum channels. We find that the solvable states are in one-to-one correspondence with quantum channels, under the assumption that the former admit a local purification. Moreover, we establish that, in a certain case, all solvable initial states reach a common fixed point of space-time unital evolution in a finite time.

From a more abstract quantum information perspective, the different classes of space-time unital gates forms a convex subset within the set of quantum channels. We first reformulate the different unitality constraints in terms of the corresponding Choi-Jamio{\l}kowski state and then specify the dimension for each of the introduced space-time unital gates. This allows us to establish a structural characterisation of the quantum channels that are unital in time and in both space directions; they turn out to exactly coincide with quantum channels that can be expressed as affine combinations of a particular class of dual-unitary gates.

The rest of this work is structured as follows. We first introduce the setting of space-time unital quantum circuit in Sec.~\ref{sec:setting}, where we state the analogue of dual-unitary conditions and comment on some interesting examples. We follow with section~\ref{sec:structure} about the structure of these families and section~\ref{sec:applications} on applying the conditions to obtain correlation functions. There we use solvable initial states, which we discuss in detail in section~\ref{sec:MPDO}. We finish with concluding in section~\ref{sec:conclusions}.
Some technical steps are delegated to the appendices.

\section{Setting}
\label{sec:setting}

In this work we consider one-dimensional quantum systems consisting of $2L$ qudits with $d$ internal states. 
They are positioned on sites labelled with half-integer numbers $1/2,1,\ldots L$. The system is evolved with a brick-wall quantum circuit of identical local quantum channels, that is, completely positive trace-preserving (CPTP) maps.

We will use two descriptions of quantum channels. First is the Kraus form~\cite{nielsen2011quantum,bengtsson2017geometry}:
\be
\rho(t+1) = \sum_k F_k \rho(t) F_k^\dagger \,,
\ee
whereas the second is the vectorised, or folded, description. In the latter we map the operators acting on $l$ consecutive qudits, i.e., operators over $\mathbb{C}^{d^l}$, to vectors in $\mathbb{C}^{d^l} \otimes \mathbb{C}^{d^l}$ as
\be
a \xmapsto{{\rm vec}} \ket{a}\, .
\ee
Naturally, we adopt the standard Hilbert-Schmidt scalar product
$\langle a|b\rangle = \tr a^\dagger b$ and its induced norm.
The mapping is specified by fixing a basis $\{\ket{n}\}$ of $\mathbb{C}^{d^l}$ which induces a basis $\{\ket{m} \! \bra{n}\}$ for  $\mathbb{C}^{d^l}\!\!\! \otimes \mathbb{C}^{d^l}$. Then, we define the following mapping of basis elements   
\be
\ket{m} \! \bra{n} \xmapsto{{\rm vec}}  \ket{m} \otimes \ket{n} \,,
\ee
and extend it by linearity.
The quantum channels, which acted before as super-operators, are mapped to operators acting over $\mathbb{C}^{d^{2l}}$, as seen in the Kraus form:
\be
\sum_k F_k(\cdot)F_k^\dagger \xmapsto{{\rm vec}}  \sum_k F_k \otimes F_k^* \, .
\ee
It will be convenient to denote the normalised vectorised identity operator over a local space $\mathbb C^\xi$ as  
\be
\ket{\mcirc}= \frac{1}{\sqrt{\xi}}  \ket{\1_\xi}=
\begin{tikzpicture}[baseline=(current  bounding  box.center), scale=.7]
\draw[very thick] (0,0) -- (0,0.35);
\draw[very thick, fill=white] (0,0) circle (0.1cm); 
\end{tikzpicture}\, .
\label{eq:Id}
\ee
These operators can be thought of as connecting the bras and kets.
In this notation, the trace over the whole space and infinite temperature state are
\be
\Tr_{2L} = d^L \bra{\mcirc \ldots \mcirc},
\quad 
\rho_\infty=\frac{\1}{d^{2L}}=\frac{1}{d^{L}}\ket{\mcirc \ldots \mcirc}.
\ee

We will adopt a convenient graphical representation of circuits of quantum channels, as is usually the case in tensor networks~\cite{cirac2021matrix}. As mentioned previously, we will focus on homogeneous (translationally invariant) noisy quantum circuits, i.e, quantum circuits where each gate represents a local quantum channel $q$ acting over a pair of neighbouring qudits. In particular, we will extensively employ vectorised notation and denote
\be
\label{eq:qgate}
q=
\begin{tikzpicture}[baseline=(current  bounding  box.center), scale=.7]
\Wgategreen{0}{0}\end{tikzpicture}\,  
=\, \,
\begin{tikzpicture}[baseline=(current  bounding  box.center), scale=.7]
\def\dx{0.15}
\def\dy{0.15}
\draw[ thick] (-4.25+\dx,0.5+\dy) -- (-3.25+\dx,-0.5+\dy);
\draw[ thick] (-4.25+\dx,-0.5+\dy) -- (-3.25+\dx,0.5+\dy);
\draw[ thick, fill=myred, rounded corners=2pt] (-4+\dx,0.25+\dy) rectangle (-3.5+\dx,-0.25+\dy);
\draw[ thick] (-4.25,0.5) -- (-3.25,-0.5);
\draw[ thick] (-4.25,-0.5) -- (-3.25,0.5);
\draw[ thick, fill=myblue, rounded corners=2pt] (-4,0.25) rectangle (-3.5,-0.25);
\draw[thick] (-3.75,0.15) -- (-3.75+0.15,0.15) -- (-3.75+0.15,0);
\draw[ thick] (-3.5 ,0.) -- (-3.5+.2,0.) --(-3.5+.2+\dx,0.+\dy) --(-3.5+\dx,0.+\dy);
%
\end{tikzpicture}\, \,= \, \sum_k E_k \otimes E_k^* \; .
\ee
After the second equality we used the vectorised Kraus form of the channel. We use $E_k$ to distinguish local from global Kraus operators $F_k$. Notice the additional wire connecting the two copies of the system, which represent the summation over Kraus operators.
In the graphical representation time flows from bottom upwards, i.e., input states vectors are applied at the bottom. In the same graphical notation, stacking operators amounts to tensoring them, with the box in the front corresponding to the leftmost operator in the tensor product.

One time step of the time evolution is given by the global quantum channel $\mathbb{Q}$ built out of local channels in the following way: 
\begin{align}
&\mathbb Q = \mathbb{T}_{2L}  q^{\otimes L}\mathbb{T}^\dagger_{2L}  q^{\otimes L} \\
&=
\begin{tikzpicture}[baseline=(current  bounding  box.center), scale=0.55]
\draw[thick, dotted] (-9.5,-1.7) -- (0.4,-1.7);
\draw[thick, dotted] (-9.5,-1.3) -- (0.4,-1.3);
\foreach \i in {3,...,13}{
\draw[gray, dashed] (-12.5+\i,-2.1) -- (-12.5+\i,0.3);
}
\foreach \i in {3,...,5}{
\draw[gray, dashed] (-9.75,3-\i) -- (.75,3-\i);
}
\foreach \i in{0.5}
{
\draw[ thick] (0.5,1+2*\i-0.5-3.5) arc (-45:90:0.17);
\draw[ thick] (-10+0.5,1+2*\i-0.5-3.5) arc (270:90:0.15);
}
\foreach \i in{1.5}{
\draw[thick] (0.5,2*\i-0.5-3.5) arc (45:-90:0.17);
\draw[thick] (-10+0.5+0,2*\i-0.5-3.5) arc (90:270:0.15);
}
\foreach \i in {0,1}{
\Text[x=1.25,y=-2+2*\i]{\scriptsize$\i$}
}
\foreach \i in {1}{
\Text[x=1.25,y=-2+\i]{\small$\frac{\i}{2}$}
}
\foreach \i in {1,3,5}{
\Text[x=-7.5+\i+1-3,y=-2.6]{\small$\frac{\i}{2}$}
}
\foreach \i in {1,2,3}{
\Text[x=-7.5+2*\i-2,y=-2.6]{\scriptsize${\i}$}
}
\Text[x=-7.5+2*5-2,y=-2.6]{\small $L$}
\foreach \jj[evaluate=\jj as \j using -2*(ceil(\jj/2)-\jj/2)] in {0}
\foreach \i in {1,...,5}
{
\draw[thick] (.5-2*\i-1*\j,-2-1*\jj) -- (1-2*\i-1*\j,-1.5-\jj);
\draw[thick] (1-2*\i-1*\j,-1.5-1*\jj) -- (1.5-2*\i-1*\j,-2-\jj);
\draw[thick] (.5-2*\i-1*\j,-1-1*\jj) -- (1-2*\i-1*\j,-1.5-\jj);
\draw[thick] (1-2*\i-1*\j,-1.5-1*\jj) -- (1.5-2*\i-1*\j,-1-\jj);
\draw[thick, fill=mygreen, rounded corners=2pt] (0.75-2*\i-1*\j,-1.75-\jj) rectangle (1.25-2*\i-1*\j,-1.25-\jj);
\draw[thick] (-2*\i+1,-1.35-\jj) -- (-2*\i+1.15,-1.35-\jj) -- (-2*\i+1.15,-1.5-\jj);%
}
\foreach \i in {0,...,4}{
\draw[thick] (-.5-2*\i,-1) -- (0.525-2*\i,0.025);
\draw[thick] (-0.525-2*\i,0.025) -- (0.5-2*\i,-1);
\draw[thick, fill=mygreen, rounded corners=2pt] (-0.25-2*\i,-0.25) rectangle (.25-2*\i,-0.75);
\draw[thick] (-2*\i,-0.35) -- (-2*\i+0.15,-0.35) -- (-2*\i+0.15,-0.5);%
}
\Text[x=-2,y=-2.6]{$\cdots$}
\end{tikzpicture}\, ,
\label{eq:Q}
\end{align}
where $\mathbb{T}_{2L}$ is an operator implementing $2L$ periodic translation by one site. We will focus on two kinds of boundary conditions: homogenous circuits with periodic boundary conditions or open boundary conditions with fully depolarising quantum channels at the edges.

Our restriction to the homogeneous setting is made for convenience and most of the subsequent discussion carries over to the inhomogenous case.
Next, we discuss possible additional constraints on the local quantum channels $q$. Different combinations of them produce different families of models with varying degree of analytical solvability.

\subsection{Trace preservation and unital quantum channels}

Any valid quantum channel $q$ preserves the trace, which is reflected in a condition on the local gate as
\begin{align}
&\textbf{(i) Trace Preservation (TP):} \quad 
\sum_k E_k^\dagger E_k = \1\;,  
\nonumber\\ &\bra{\mcirc \mcirc} q =
\bra{\mcirc \mcirc}\;,
\quad
\begin{tikzpicture}[baseline=(current  bounding  box.center), scale=.7]
\Wgategreen{0}{0}
\draw[thick, fill=white] (0-0.5,0.5) circle (0.1cm); 
\draw[thick, fill=white] (0.5,0.5) circle (0.1cm); 
\end{tikzpicture}\,  =
\begin{tikzpicture}[baseline=(current  bounding  box.center), scale=.7]
\draw[thick](-.5,-.5)--(-.5,.5);\draw[thick](.5,-.5)--(.5,.5);
\draw[thick, fill=white] (0-0.5,0.5) circle (0.1cm); 
\draw[thick, fill=white] (0.5,0.5) circle (0.1cm); \end{tikzpicture}\;
,
\end{align}
where the three expressions are equivalent. This channel is completely positive by construction, since it is written in the Kraus form.

If the local quantum channel maps $\1$ to $\1$, then the channel is
\begin{align}
&\textbf{(ii) Time Unital (TU):} \quad  
\sum_k E_k E_k^\dagger  = \1\;, 
\nonumber\\ 
&q  \ket{\mcirc \mcirc}= \ket{\mcirc \mcirc}\;, 
\quad
\begin{tikzpicture}[baseline=(current  bounding  box.center), scale=.7]
\Wgategreen{0}{0}
\draw[thick, fill=white] (0-0.5,0-0.5) circle (0.1cm); 
\draw[thick, fill=white] (0.5,0-0.5) circle (0.1cm); 
\end{tikzpicture}\,  =
\begin{tikzpicture}[baseline=(current  bounding  box.center), scale=.7]
\draw[thick](-.5,-.5)--(-.5,.5);\draw[thick](.5,-.5)--(.5,.5);
\draw[thick, fill=white] (0-0.5,0-0.5) circle (0.1cm); 
\draw[thick, fill=white] (0.5,0-0.5) circle (0.1cm); \end{tikzpicture}\, .
\end{align}
This also implies the unitality of the global channel $\mathbb{Q}$. Evidently, trace preservation amounts to unitality in the backward-time direction. This is since reversing the time direction (i.e., interpreting the graphical notation from top to bottom), trace preservation turns into (time) unitality and vice-versa.

\subsection{Space quantum channels}

The defining feature of dual-unitary gates~\cite{bertini2019exact} is their property to maintain unitarity even when they are interpreted as evolutions along the spatial directions. In a natural extension of dual-unitarity to quantum channels, we demand that the evolution along one or both of the the space directions yields a physical evolution, i.e., a legitimate quantum channel. Let us first demand that it is a space quantum channel when considered from right to left. As in the folded picture operators are by construction in the Kraus form, complete positivity is automatic and we only need to ensure trace preservation.

For that, we define the space-time rotation operation $\sim$ which maps the right (left) 
indices to the inputs (outputs)
\be
\label{eq:tildeqgate}
\tilde{q}=\begin{tikzpicture}[baseline=(current  bounding  box.center), scale=.7]
\draw[ thick] (-4.25,0.5) -- (-3.25,-0.5);
\draw[ thick] (-4.75,0.5)-- (-4.75,-0.5)--(-4.25,-0.5) -- (-3.25,0.5)-- (-2.75,0.5) -- (-2.75, -0.5);
\draw[ thick, fill=mygreen, rounded corners=2pt] (-4,0.25) rectangle (-3.5,-0.25);
\draw[thick] (-3.75,0.15) -- (-3.75+0.15,0.15) -- (-3.75+0.15,0);
\Text[x=-4.25,y=-0.75]{}
\end{tikzpicture}\; .
\ee
Therefore trace preservation of $\tilde{q}$ ensures that the evolution is a legitimate quantum channel, when considered from right to left. Equivalently, this corresponds to unitality in the left-to-right space direction. As such, we say that $q$ is
\begin{align}
\label{eq:leftU}
&\textbf{(iii) Left Unital (LU):} \quad
 \sum_k (\tilde{E}_k)^\dagger \tilde{E}_k = \1, \nonumber\\
 &\bra{\mcirc \mcirc} \tilde{q} = \bra{\mcirc \mcirc},
\quad
\begin{tikzpicture}[baseline=(current  bounding  box.center), scale=.7]
\Wgategreen{0}{0}
\draw[thick, fill=white] (0-0.5,0-0.5) circle (0.1cm); 
\draw[thick, fill=white] (-0.5,0.5) circle (0.1cm); 
\end{tikzpicture}\,  =
\begin{tikzpicture}[baseline=(current  bounding  box.center), scale=.7]
\draw[thick](-.5,-.5)--(.5,-.5);\draw[thick](-.5,.5)--(.5,.5);
\draw[thick, fill=white] (0-0.5,0-0.5) circle (0.1cm); 
\draw[thick, fill=white] (-0.5,0.5) circle (0.1cm); \end{tikzpicture}\; .
 \end{align}
Here the $\sim$ operator changes the input (output) legs of $E_k$ in the same way as for $q$.

Analogously, we may demand that the left-to-right evolution is as quantum channel, resulting in $q$ being
\begin{align}
&\textbf{(iv) Right Unital (RU):} \quad
 \sum_k  \tilde{E}_k (\tilde{E}_k)^\dagger = \1, \nonumber
\\ &\tilde{q} \ket{\mcirc \mcirc}  = \ket{\mcirc \mcirc},\quad
\begin{tikzpicture}[baseline=(current  bounding  box.center), scale=.7]
\Wgategreen{0}{0}
\draw[thick, fill=white] (0.5,0-0.5) circle (0.1cm); 
\draw[thick, fill=white] (0.5,0.5) circle (0.1cm); 
\end{tikzpicture}\,  =
\begin{tikzpicture}[baseline=(current  bounding  box.center), scale=.7]
\draw[thick](-.5,-.5)--(.5,-.5);\draw[thick](-.5,.5)--(.5,.5);
\draw[thick, fill=white] (0.5,0-0.5) circle (0.1cm); 
\draw[thick, fill=white] (0.5,0.5) circle (0.1cm); \end{tikzpicture}
\; .
\end{align}

\begin{table*}
    \centering
    \begin{tabular}{c|c|c}
         \textbf{Class} & \textbf{Conditions} &\textbf{Alternative Conds.} \\
        \hline
        2-way unital & (i), (iii)   & (i), (iv)
          \\
     3-way time unital & (i), (ii), (iii)   & (i), (ii), (iv)
      \\
     3-way space unital & (i), (iii), (iv)   & 
      \\
     4-way unital & (i), (ii), (iii), (iv)   &
    \end{tabular}
    \caption{
    A summary of different classes and the conditions which they satisfy.
    }
    \label{tab:Classes}
\end{table*}

In the following, we consider families of models which satisfy different combinations of the aforementioned conditions, which we summarise in Tab. \ref{tab:Classes}. Due to the physical relevance of quantum channels, we henceforth always assume property (i). If, in addition, one of the (iii)-(iv) holds, we will refer to such quantum channels as \textit{2-way unital}, and similarly, for two conditions out of (ii)-(iv), we will refer to them as \textit{3-way unital}. In this class, there are two possibilities with different physical consequences. The first family is \textit{3-way time unital} channels, which satisfies conditions (i)-(iii) [or (i), (ii), and (iv)], and the second family is \textit{3-way space unital} channels, which satisfies conditions (i), (iii), and (iv). The last possibility is when all conditions (i)-(iv) are satisfied, in which case we call the quantum channel \textit{4-way unital}. We will momentarily show that this is indeed possible, i.e., there exist nontrivial representatives belonging strictly to each of the above classes.

Let us stress that convex combinations (averages) of dual-unitary gates will always yield a subset of 4-way unital examples, and other families cannot be achieved in this way. However, as we will subsequently show, 3-way unitality turns out to be sufficient for exact solvability.

\subsection{Perfect quantum channels}

Perfect tensors were introduced in~\cite{pastawski2015holographic} as the building blocks of exactly solvable toy models for the AdS/CFT correspondence, and also independently in~\cite{goyeneche2015absolutely} where the corresponding matrices were called two-unitary.
They are tensors with an even number of indices whose defining property is that any balanced bipartition of these indices into inputs and outputs gives a unitary transformation\footnote{As such, they are closely connected with absolutely maximally entangled states~\cite{goyeneche2015absolutely}, i.e., multipartite states that have maximally mixed marginals after tracing out at least half of the system. This connection can be anticipated by recalling that every bipartite maximally entangled states is in correspondence with a vectorised unitary (since $\Tr_B ( \ket{U}_{AB} \! \bra{U}) = \1$)~\cite{bengtsson2017geometry}.
In the case of four indices, the perfect tensor conditions are equivalent to unitarity, dual-unitarity and unitarity of the partial transpose  (T-duality)~\cite{aravinda2021from}. The last condition makes the class more restrictive than dual-unitarity. Note that the matrices corresponding to perfect tensors are also called 2-unitaries~\cite{goyeneche2015absolutely}, which are not to be confused with dual-unitary matrices.
}.

In analogy to perfect tensors, we define perfect channels. Not only do they satisfy conditions (i)-(iv), but also the diagonal unitality conditions: 
\begin{align}\begin{split}
\! \! \! \! \! &\textbf{(v):} \;
\sum_k  \widetilde{( E_k S)} \widetilde{(S E_k^\dagger)} = \1, \;
\begin{tikzpicture}[baseline=(current  bounding  box.center), scale=.7]
\Wgategreen{0}{0}
\draw[thick, fill=white] (0.5,0.5) circle (0.1cm); 
\draw[thick, fill=white] (-0.5,-0.5) circle (0.1cm); 
\end{tikzpicture}\,  =
\begin{tikzpicture}[baseline=(current  bounding  box.center), scale=.7]
\draw[thick](-.3,0.3)--(-.6,.6);
\draw[thick](.3,-.3)--(.6,-.6);
\draw[thick, fill=white] (0.3,0-0.3) circle (0.1cm); 
\draw[thick, fill=white] (-0.3,0.3) circle (0.1cm); \end{tikzpicture}\; ,
 \end{split} \\
\begin{split}
&\textbf{(vi):} \;
\sum_k  \widetilde{(S E_k^\dagger)} \widetilde{( E_k S)}  = \1, \;
\begin{tikzpicture}[baseline=(current  bounding  box.center), scale=.7]
\Wgategreen{0}{0}
\draw[thick, fill=white] (0.5,-0.5) circle (0.1cm); 
\draw[thick, fill=white] (-0.5,0.5) circle (0.1cm); 
\end{tikzpicture}\,  =
\begin{tikzpicture}[baseline=(current  bounding  box.center), scale=.7]
\draw[thick](.3,0.3)--(.6,.6);
\draw[thick](-.3,-.3)--(-.6,-.6);
\draw[thick, fill=white] (0.3,0.3) circle (0.1cm); 
\draw[thick, fill=white] (-0.3,-0.3) circle (0.1cm); \end{tikzpicture}
\; ,
 \end{split}
\end{align}
where $S$ is the swap operator.
The simplest example of a perfect channel is the completely depolarising map
\begin{align*}
    \mathcal T (\cdot) = \Tr(\cdot) \frac{\1}{d^2} \; .
\end{align*}
It fulfils all conditions (i)-(vi), as is apparent from its folded graphical representation:
\begin{align}
    \begin{tikzpicture}[baseline=(current  bounding  box.center), scale=.7]
\Wgategreen{0}{0}
\end{tikzpicture}\;  = 
\begin{tikzpicture}[baseline=(current  bounding  box.center), scale=.7]
\draw[thick](.3,0.3)--(.6,.6);
\draw[thick](-0.3,0.3)--(-0.6,.6);
\draw[thick](0.3,-0.3)--(0.6,-.6);
\draw[thick](-.3,-.3)--(-.6,-.6);
\draw[thick, fill=white] (0.3,0.3) circle (0.1cm); 
\draw[thick, fill=white] (-0.3,0.3) circle (0.1cm); 
\draw[thick, fill=white] (0.3,-0.3) circle (0.1cm); 
\draw[thick, fill=white] (-0.3,-0.3) circle (0.1cm); \end{tikzpicture} \,.
\end{align}
Remarkably, in contrast to perfect tensors which exist only for $d\geq3$~\cite{aravinda2021from}, perfect channels exist already for qubits ($d=2$).
A systematic way of constructing them will follow in the next section.

\section{Structure of Space-Time Unital Quantum Channels}
\label{sec:structure}
Here we investigate further the general structure of the previously introduced families of models. Specifically, after reformulating the constraints (i)-(vi) through the channel-state duality, we determine the dimensions of the corresponding manifolds of quantum channels by counting the number of ways we can deform the completely depolarising channel. This, together with convexity, is enough to give us the dimensions of these spaces. We also show that all 4-way unital channels can be decomposed as an affine combination of dual-unitary gates, in analogy with a matching result for ordinary (time) unital channels and unitary gates.

\subsection{The Choi-Jamio{\l}kowski state of space-time unital channels}

\begin{table}[t]
    \centering
    \begin{tabular}{c|c}
                            & \textbf{Maximally-mixed} \\
         \textbf{Condition} & \textbf{marginal} \\
        \hline
        \textbf{(i)} Trace Preservation      & $\Tr_{\!AB} \left[ J(\mathcal E_{AB})  \right] $  \\
        \textbf{(ii)} Time Unital     &  $\Tr_{\!A'B'} \left[ J(\mathcal E_{AB})  \right] $  \\
        \textbf{(iii)} Left Unital    &  $\Tr_{\!AA'} \left[ J(\mathcal E_{AB})  \right] $ \\
        \textbf{(iv)} Right Unital    &  $\Tr_{\!BB'} \left[ J(\mathcal E_{AB})  \right] $  \\
        \textbf{(v)} Diagonal         &  $\Tr_{\!A'B} \left[ J(\mathcal E_{AB})  \right] $ \\
        \textbf{(vi)} Diagonal        &  $\Tr_{\!AB'} \left[ J(\mathcal E_{AB})  \right] $ \\
    \end{tabular}
    \caption{Reformulation of the space-time unitality conditions (i)-(vi) in terms of the Choi-Jamio{\l}kowski state~\eqref{eq:Choi}. The condition stated on the left column is equivalent to the corresponding marginal being maximally mixed.
    }
    \label{tab:Choi}
\end{table}

It will be beneficial to reformulate the space-time unitality conditions (i)-(vi) in terms of the corresponding Choi-Jamio{\l}kowski state (CJS). The utility of this representation arises from the fact that a superoperator $\mathcal E_{AB}$, acting on regions $A$ and $B$, is completely positive if and only if the corresponding CJS
\begin{align} \label{eq:Choi}
    J(\mathcal E_{AB}) = \mathcal E_{AB} \otimes \1_{A'B'} (\ket{\Phi}_{ABA'B'} \! \bra{\Phi} )
\end{align}
is positive~\cite{nielsen2011quantum}, where $\textstyle \ket{\Phi}_{ABA'B'} = \frac{1}{d} \textstyle \sum_{ij = 1}^d \ket{ijij}$.

In this representation, each of the unitality conditions results in an equivalent constraint over the CJS, imposing that an appropriate marginal ought to be maximally mixed. The correspondence is summarised in Table~\ref{tab:Choi}. The derivation can be conveniently carried out pictorially in the folded notation, in which the CJS takes the form
\begin{align} \label{eq:choi_graphical}
 \ket{J(\mathcal E_{AB})} = 
    \begin{tikzpicture}[baseline=(current  bounding  box.center), scale=.7]
\Wgategreen{0}{0}
\draw[very thick](-0.5,-0.5) to[out=-135, in=-90] (1.25,-.50)--(1.25,.5);
\draw[very thick](0.5,-0.5) to[out=-45, in=-90] (2.25,-.50)--(2.25,.5);
\Text[x=-.5,y=1.]{$A$}
\Text[x=.5,y=1.]{$B$}
\Text[x=1.25,y=1.]{$A'$}
\Text[x=2.25,y=1.]{$B'$}
\end{tikzpicture}\; .
\end{align}

For example, the left-unital condition gives the maximally-mixed marginal $\Tr_{\!AA'} \left[ J(\mathcal E_{AB})  \right] = \1 $. This is most easily verified in vectorised graphical notation
\be
\begin{tikzpicture}[baseline=(current  bounding  box.center), scale=.7]
\Wgategreen{0}{0}
\draw[very thick](-0.5,-0.5) to[out=-135, in=-90] (1.25,-.50)--(1.25,.5);
\draw[very thick](0.5,-0.5) to[out=-45, in=-90] (2.25,-.50)--(2.25,.5);
\Text[x=-.5,y=1.]{$A$}
\Text[x=.5,y=1.]{$B$}
\Text[x=1.25,y=1.]{$A'$}
\Text[x=2.25,y=1.]{$B'$}
\draw[thick, fill=white] (-0.5, 0.5) circle (0.1cm); 
\draw[thick, fill=white] (1.25, 0.5) circle (0.1cm); 
\end{tikzpicture}\; 
=
\begin{tikzpicture}[baseline=(current  bounding  box.center), scale=.7]
\Text[x=-.5,y=1.]{$A$}
\Text[x=.5,y=1.]{$B$}
\Text[x=1.25,y=1.]{$A'$}
\Text[x=2.25,y=1.]{$B'$}
\draw[very thick](0.5,.0)--(.5,.5);
\draw[very thick]  (2.25,.0)--(2.25,.5);
\draw[thick, fill=white] (0.5, 0.) circle (0.1cm); 
\draw[thick, fill=white] (2.25, 0.) circle (0.1cm); 
\end{tikzpicture}\;,
\ee
which is equivalent to Eq.~\eqref{eq:leftU}.

\subsection{Dimension counting}\label{sec:dimension_counting}

Let us first recall that the set of quantum channels is convex~\cite{watrous2018theory}. It is then easy to see that this property is passed down to the sets of CPTP maps that satisfy any combination of the properties (ii)-(vi), as convex combinations preserve these properties. Exactly as in the case of ordinary quantum channels, each of these convex sets has a well-defined dimension corresponding to the minimum number of real parameters that are needed to describe the underlying channels. More precisely, this is the dimension of a minimal affine subspace\footnote{Affine subspace here means a subspace of the vector space that has been shifted from the origin by a constant displacement.} containing the target convex set. The underlying parent vector space can be taken to be that of hermiticity-preserving superoperators.

As we noted earlier, the completely depolarizing channel is perfect thus all of the aforementioned distinguished sets of quantum channels are non-empty. In fact, suitably ``deforming'' the completely depolarizing channel allows counting the exact dimension.

\begin{table*}[t]
    \centering
    \begin{tabular}{c|c|c|c}
        \textbf{Type} & \textbf{Conditions} & \textbf{dimension} & \textbf{co-dimension} \\
        \hline
        CPTP    & i        & $d^8- d^4$ & $0$\\
        unital  & i,ii     & $(d^4-1)^2$ & $d^4-1$\\
        2-way   & i,iii    & $d^8- 2 d^4+ d^2$ & $d^4-d^2$\\
        3-way   & i-iii    & $(d^4-1)^2- (d^2-1)^2$ & $2d^4-2d^2$\\
        4-way   & i-iv     & $(d^4-1)^2- 2(d^2-1)^2$ & $3d^4-4d^2+1$\\
        perfect & i-vi     & $(d^4-1)^2- 4(d^2-1)^2$ & $5d^4-8d^2+3$\\
    \end{tabular}
    \caption{
    Examples, dimensions and co-dimensions of different quantum channels. When various combinations of conditions are possible, we write only one possible instance.}
    \label{tab:dimensions}
\end{table*}

\begin{Proposition} \label{prop:counting_dimensions}
Consider the convex set of quantum channels acting over states in $\mathbb C^{d} \otimes \mathbb C^{d}$ (i.e., over a pair of sites) satisfying the introduced space-time unitality conditions. Then the dimensions of the different sets are as given in Table~\ref{tab:dimensions}. 
\end{Proposition}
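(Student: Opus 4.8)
The plan is to carry out the entire computation in the Choi--Jamio{\l}kowski picture, where each of the six families becomes an \emph{affine slice} of a fixed real vector space. The parent space is that of Hermiticity-preserving superoperators on $\mathbb{C}^{d^2}$, which under channel--state duality is identified with the Hermitian operators $J$ on the four-qudit space $\mathbb{C}^{d^4}$ associated with sites $A,B,A',B'$; this space has real dimension $d^8$. Complete positivity is the cone constraint $J\ge 0$, an inequality that does not lower the dimension, while each unitality condition is, by Table~\ref{tab:Choi}, the \emph{linear} requirement that a particular two-body marginal of $J$ be maximally mixed. The dimension of each convex set is therefore the dimension of the affine subspace cut out by the imposed linear constraints, provided positivity adds no hidden constraints (addressed last).

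The key device is to expand $J$ in an orthonormal product basis of generalized Gell-Mann operators $\{\sigma_\mu\}_{\mu=0}^{d^2-1}$ on each site, with $\sigma_0\propto\1$ and the remaining $\sigma_\mu$ traceless. Writing $J=\sum c_{\mu\nu\alpha\beta}\,\sigma_\mu^A\sigma_\nu^B\sigma_\alpha^{A'}\sigma_\beta^{B'}$ with real coefficients, tracing out a pair of sites annihilates every term carrying a nontrivial factor on those sites, so the condition ``the marginal on a pair $XY$ is maximally mixed'' becomes exactly ``$c_{\mu\nu\alpha\beta}=0$ whenever the operator support is a nonempty subset of $\{X,Y\}$''. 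In this basis the constraints are pure coordinate-vanishing conditions, hence manifestly linearly independent, and --- crucially --- no condition ever touches a coefficient supported on three or four sites. I would record that each vertex appearing as an endpoint of a chosen condition forces its $d^2-1$ single-site coefficients to vanish, and each pair forces the $(d^2-1)^2$ coefficients supported exactly on that pair to vanish.

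The counting is then purely combinatorial on the complete graph $K_4$ with vertices $\{A,B,A',B'\}$: the six conditions (i)--(vi) are precisely its six edges, the single-site support patterns are its vertices, and imposing a subset of conditions zeroes the single-site coefficients at every \emph{covered} vertex together with the pair coefficients at every \emph{chosen} edge, with no double counting since distinct edges are distinct and vertices are counted by union. If the imposed conditions cover $v$ vertices and use $e$ edges, the affine dimension is
\[
d^8-1-v\,(d^2-1)-e\,(d^2-1)^2,
\]
where the extra $-1$ fixes the overall normalization $c_{0000}$ via trace preservation. I would then read off $(v,e)$ for each row of Table~\ref{tab:dimensions} --- CPTP $(2,1)$, unital $(4,2)$, $2$-way $(3,2)$, $3$-way $(4,3)$, $4$-way $(4,4)$, perfect $(4,6)$ --- and check that each value reproduces the stated dimension; the co-dimensions follow by subtracting from the CPTP dimension.

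The step that needs genuine care, and the main obstacle, is showing that the dimension is not smaller than this affine count, i.e.\ that positivity imposes no further relations and each set is full-dimensional inside its constraint subspace. I would settle this with the completely depolarizing channel, whose CJS is the maximally mixed operator $\tfrac{1}{d^4}\1$: it satisfies all of (i)--(vi) and, being strictly positive (full rank), lies in the relative interior of the positive semidefinite cone. Hence any Hermitian perturbation respecting the linear constraints keeps $J$ positive for a sufficiently small step, so the affine hull of each convex set coincides with its full constraint subspace and the dimension is exactly as computed. The only remaining bookkeeping is to confirm the basis reformulation of each marginal condition and the overlap pattern on $K_4$; once those are in place, every entry of the table is forced.
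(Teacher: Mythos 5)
Your proof is correct and follows essentially the same route as the paper's: both expand the channel/Choi state in a product basis of identity-plus-traceless Hermitian operators, observe that each unitality condition forces exactly the coefficients supported on nonempty subsets of the corresponding pair of Choi legs to vanish, and invoke the full-rank Choi state of the completely depolarizing channel to conclude that positivity imposes no further restrictions. Your $K_4$ vertex/edge bookkeeping computes the co-dimension where the paper directly enumerates the allowed index patterns, but the two counts are term-for-term equivalent and your formula $d^8-1-v(d^2-1)-e(d^2-1)^2$ reproduces every entry of Table~\ref{tab:dimensions}.
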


The proof can be found in Appendix~\ref{sec:app_proofs}. It consists of counting all of the linearly independent perturbations around $\mathcal T$ that are compatible with the space-time unitality constraints. Preservation of complete positivity can then be deduced from the positivity of the CJS.

In Table~\ref{tab:dimensions} we also provide co-dimensions,  which states the number of the fixed parameters with the respect to general CPTP channels.
Note that in the limit of large local space $d \gg 1$, where dimension scales as $\sim d^8$, the co-dimension for each of the above types scales  only as $\sim r d^4$, where $r$ is the total number of conditions (ii)-(vi) that needs to be satisfied. In particular,
3-way unitality, which suffices for solvability, has $\sim d^4$ more free parameter than 4-way unitality.
All of the classes mentioned before have considerably more free parameters ($\sim d^8$) that the $\sim 2d^3$ parameters of dual-unitaries~\cite{prosen2021manybody}.

It is important to stress that our dimension-counting arguments crucially depend on convexity of quantum channels. In contrast, convexity is absent on the level of dual-unitary gates. There, the sets of dual-unitary models with fixed input-output dimensions may have a complicated structure of the underlying space, e.g., as uncovered by numerical studies in~\cite{prosen2021manybody}.

We would like to better understand the structure of the different sets of quantum channels mentioned before. One way to characterise each convex set is by starting at the completely depolarising channel, which is always contained in it. From there we investigate the distance to the boundary of the set, which is characterised by the failure of the complete positivity. 
To do that, we choose a perturbation of the completely depolarising channel respecting the correct conditions and check for what strength of perturbation complete positivity fails. The smallest such strength gives us the distance to the boundary in that specific direction. Remarkably, we numerically find that this distance is much smaller if we choose a random direction, than if we move towards a quantum channel given by a dual-unitary. 
Note that the same holds in absence of dual-unitarity, i.e., for the distance to unitary channels.
We explain these numerical experiments with more technical details in the appendix~\ref{sec:app_structure}.

%
We expect that finding the structure of the extremal points of the introduced sets of quantum channels to be challenging, since already the set of quantum channels over a single qubit is rather complicated~\cite{ruskai2002analysis}, so we leave it for future work.

Utilising the counting argument of this section, we next obtain a characterisation of 4-way unital channels in terms of dual-unitary gates.


\subsection{Dual unitaries and 4-way unital quantum channels}

We start with the fact that \textit{unital} quantum channels can always be expressed as \textit{affine} combinations\footnote{Linear combinations with possibly negative coefficients summing to one.} of unitaries $U_k$~\cite{mendl2009unital}. That is, for completely positive maps:
\begin{subequations} \label{subeq:affine_unitary}
\begin{align}
 \text{(i) and (ii)} &\quad  \Longrightarrow  \\
\begin{split} \label{eq:unital_channel_mendl}
&\exists \quad \lambda_k \in \mathbb R, \; U_k \text{ unitary:}  \\
& q= \sum_k \lambda_k U_k \otimes U_k^*, \quad \sum_k \lambda_k=1 \,.
\end{split}
\end{align}
\end{subequations}
Conversely, any affine combination of unitaries yielding a completely positive map is clearly both trace-preserving and unital. 

On the other hand, restricting to \textit{convex} combinations of unitaries (i.e., $\lambda_k \ge 0$) defines \textit{mixed unitary} quantum channels~\cite{watrous2018theory}. As is well known, the sets of unital and mixed unitary quantum channels do not coincide, except for the case of a single qubit~\cite{landau1993birkhoff,mendl2009unital}. Although the convex representation is not exhaustive, it has the advantage of always producing valid (i.e., completely positive) quantum channels, a property that might fail in the affine form~\eqref{eq:unital_channel_mendl}.

Taking into account also space unitality, it is natural to define as \textit{mixed dual-unitary} the quantum channels that can be expressed as \textit{convex} combinations of dual-unitary gates. It is immediate from the definition that mixed dual-unitary channels are also 4-way unital. The same is also true for affine combinations of dual-unitary gates, but there complete positivity is not automatic, and needs to be demanded separately.

In fact, a converse of this last statement can also be established. Before stating the result, we need to recall an explicit parameterization of dual-unitary gates. In~\cite{claeys2021ergodic} it was shown that
\be  \label{eq:dual_unitary_param_main}
   U (J) = (W _1 \otimes W _2)  \exp \left(i J s_3 \otimes s_3\right) S (V _1 \otimes V _2),
\ee
are dual-unitary gates for any local dimension $d$, where $J\in \mathbb R$, the local unitaries are arbitrary, and $S$ denotes the swap operator. Above $s_i$ with $i  = 1,2,3$ denote the spin matrices for the $d$-dimensional irreducible representation of $su(2)$. Although the above paramaterization is exhaustive for $d=2$ dual-unitary gates~\cite{bertini2019exact}, it only represents a subfamily in higher dimensions.

We can now establish an implication analogous to~\eqref{subeq:affine_unitary} but for dual-unitary gates:


\begin{Proposition} \label{prop:qubits_affine_dual}
Any 4-way unital quantum channel can be decomposed as an affine combination of dual-unitary gates, i.e.,
\begin{align}
\label{eq:4way_dual_unitary}
 \text{(i)-(iv), } \;  &  \Longrightarrow  \nonumber \\
\begin{split}
&\exists \quad \lambda_k \in \mathbb R, \; U_k \text{ dual unitary:}  \\
& q= \sum_k \lambda_k U_k \otimes U_k^*, \quad \sum_k \lambda_k=1 \,.
\end{split}
\end{align}
In particular, dual-unitary gates of the form~\eqref{eq:dual_unitary_param_main} suffice.
\end{Proposition}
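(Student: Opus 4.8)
The plan is to prove the implication by a dimension-matching argument built on Proposition~\ref{prop:counting_dimensions}. Write $\mathcal A_4$ for the affine span of all $4$-way unital channels; by the trace-preservation normalisation this affine space has dimension $D := (d^4-1)^2 - 2(d^2-1)^2$, the value listed in the $4$-way row of Table~\ref{tab:dimensions}. Let $\mathcal B$ denote the affine span of the folded dual-unitary channels $U\otimes U^*$ with $U$ of the form~\eqref{eq:dual_unitary_param_main}. The first (easy) step is to show $\mathcal B \subseteq \mathcal A_4$: a single dual-unitary gate is $4$-way unital, since unitarity of $U$ is exactly conditions (i)--(ii) for $U\otimes U^*$, while unitarity of the space-time rotated $\tilde U$ (the defining property of dual-unitarity) is exactly (iii)--(iv); and because conditions (i)--(iv) are \emph{linear} in the channel (each fixes a marginal of the Choi state to be maximally mixed, cf.\ Table~\ref{tab:Choi}), they survive affine combinations. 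The proposition then reduces to the reverse inclusion, for which, since $\mathcal B\subseteq\mathcal A_4$, it suffices to show $\dim \mathcal B \ge D$, after which equality of the two nested affine spaces follows.

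The second step computes $\dim \mathcal B$. The subtle point is that $\mathcal B$ is typically \emph{far larger} than the dual-unitary gate manifold itself: the latter carries only $\mathcal O(d^2)$ real parameters ($W_{1,2},V_{1,2}\in\mathrm U(d)$ and $J\in\mathbb R$), whereas affine combinations of points on this curved manifold can fill an affine space of dimension up to $\sim d^8$. Hence the tangent space at a single gate is useless, and one must instead evaluate $\dim\mathcal B = \dim_{\mathbb R}\operatorname{span}\{U\otimes U^*\}-1$ as the dimension of the linear span of the whole nonlinearly embedded family. I would factor the folded adjoint action as $\mathrm{Ad}_{U(J)} = L_W\,\mathrm{Ad}_{K(J)}\,L_V$, with $K(J)=e^{iJ s_3\otimes s_3}S$ the core and $L_W=\mathrm{Ad}_{W_1}\otimes\mathrm{Ad}_{W_2}$, $L_V=\mathrm{Ad}_{V_1}\otimes\mathrm{Ad}_{V_2}$ the local parts. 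Averaging the $W_i$ and $V_i$ independently over the Haar measure on $\mathrm U(d)$, and using $\operatorname{span}\{\mathrm{Ad}_W:W\in\mathrm U(d)\}=\mathbb C\,\mathbbm{1}\oplus M_{d^2-1}(\mathbb C)$ (trivial $\oplus$ adjoint), reduces the problem to understanding how $\mathrm{Ad}_{K(J)}$ intertwines these local blocks as $J$ varies. The $J$-dependence can be resolved by evaluating at $J=0$ together with its first derivatives, and the orthogonal complement of $\mathcal B$ is then pinned down by a finite linear-algebra condition coming from Peter--Weyl (linear independence of irreducible matrix coefficients).

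The main obstacle is precisely this span computation: one must confirm that the directions \emph{not} reached are exactly the $2(d^2-1)^2$ directions excised by the space-unitality constraints (iii)--(iv), so that $\dim\mathcal B = D$ and not something strictly smaller. I expect the cleanest route is to form the Gram (covariance) operator $G=\mathbb E_{W,V,J}\big[\,|U\otimes U^*\rangle\!\langle U\otimes U^*|\,\big]$, whose range is $\operatorname{span}\{U\otimes U^*\}$, and to read off its rank via Weingarten calculus; alternatively one exhibits an explicit list of $D$ gates of the form~\eqref{eq:dual_unitary_param_main} whose folded channels are affinely independent. Either way, once $\dim\mathcal B \ge D$ is in hand, combining with $\mathcal B\subseteq\mathcal A_4$ yields $\mathcal B=\mathcal A_4$; hence every $4$-way unital channel, lying in $\mathcal A_4$, is an affine combination of such dual-unitary gates, with complete positivity of the sum automatic since it is assumed for the channel itself, exactly paralleling the unital/unitary statement~\eqref{subeq:affine_unitary}.
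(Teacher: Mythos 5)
Your high-level strategy is sound and is in fact a repackaging of what the paper does: both arguments hinge on showing that the folded dual-unitary channels of the form~\eqref{eq:dual_unitary_param_main} span the full $D$-dimensional affine space identified in Proposition~\ref{prop:counting_dimensions}. Your easy inclusion $\mathcal B \subseteq \mathcal A_4$ is correct (the unitality conditions are linear, so they survive affine combinations), and the reduction to $\dim \mathcal B \ge D$ is valid logic for nested affine subspaces. The issue is that you then explicitly defer the entire technical content — ``the main obstacle is precisely this span computation'' — and offer only a sketch (Gram operator rank via Weingarten calculus, or an unexhibited explicit list of $D$ affinely independent gates) without carrying either route out. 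As it stands, the proof does not close: nothing in your write-up verifies that the span of $\{U(J)\otimes U(J)^*\}$ actually reaches all $D+1$ linearly independent directions rather than something strictly smaller.

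The part you would need to supply is exactly where the nontrivial work lies, and it is worth noting what the paper does there. Working in the basis $\mathcal F_{kl,ij}$ of Proposition~\ref{prop:counting_dimensions}, the allowed deformations split into three classes; the components with all indices nonzero and the ``paired-zero'' components ($i=l=0$ or $j=k=0$) are reached already by $J=0$ gates, i.e.\ by $\mathcal S(\mathcal G_{li}\otimes\mathcal G_{jk})$ built from affine unitary decompositions of the single-qudit maps $\mathcal G_{kl}$ of Eq.~\eqref{eq:Gkk}. But the components with \emph{exactly one} zero index cannot be reached at $J=0$: one needs the first derivative in $J$, which produces the commutator $i\epsilon[(F_3\otimes F_3), S(\cdot)S]$, and the essential mechanism is that this commutator maps the traceless operator $s_3\otimes s_2$ to $\epsilon\, s_1\otimes s_3^2$ with $\Tr(s_3^2)>0$, thereby generating an identity component on one leg. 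Your remark that ``the $J$-dependence can be resolved by evaluating at $J=0$ together with its first derivatives'' points in the right direction, but without exhibiting this mechanism (or completing the Weingarten/Peter--Weyl rank computation) you have not shown $\dim\mathcal B\ge D$, and the claimed equality $\mathcal B=\mathcal A_4$ remains unproven.
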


Our proof, given in Appendix~\ref{app:proofProp2}, is constructive and builds on top of the approach of Proposition~\ref{prop:counting_dimensions}. In a nutshell, the idea is to express all 4-way unitality preserving deformations around the completely depolarising channel as appropriate linear combinations of dual-unitary gates.
One of the difficulties is that an exhaustive parameterization for the latter beyond $d=2$ is not known. Nevertheless, the subfamily~\eqref{eq:dual_unitary_param_main} turns out to be sufficient. In particular, for $d>2$ there are dual-unitaries gates outside of the parameterization used in the proof. A unitary channel made out of such gates can be expressed as an affine combination of the before-mentioned subfamily, but not as a convex combination.

There are two direct corollaries of this Proposition. First, convex mixtures of dual-unitary gates are \textit{full-dimensional}\footnote{Meaning that they have the same dimension as the full set, but they are still non-exhaustive.} within the set of 4-way unital quantum channels. Moreover, by a direct application of Carath\'{e}odory's theorem~\cite{watrous2018theory}, we can bound the number of summands in the expansion of a mixed-dual unitary channel as convex combinations of dual-unitaries. The theorem ensures that a point in an $n$ dimensional convex hull can be expressed using at most $n+1$ points from that set.
In our case, using Table~\ref{tab:dimensions}, the theorem implies the existence of a positive integer
\begin{align}
    M = d^2 ( d^6 - 4 d^2 + 4)
\end{align}
such that every mixed dual-unitary channel can be written as
\begin{align}
    q= \sum_{k=1}^m p_k U_k \otimes U_k^*
\end{align}
where $m \le M$, $p$ is a probability vector and $U_k$ are dual-unitary.

One the other hand, it is not a priori clear whether every 4-way unital quantum channel is a \textit{convex} combination of dual-unitary gates. Physically, a positive answer would imply that all 4-way unital channels can be implemented through dual-unitary dynamics, together with classical randomness.


\subsection{Unbiased averaging around dual-unitary gates produces 4-way channels }

Our aim here is to provide one of the possible ways of obtaining a 4-way unital quantum channel, namely under averaging unitary dynamics. Remarkably, 4-way unitality can hold even if the unitary dynamics is non-dual-unitary. As we will see, the key idea is that it suffices that the underlying distribution is suitably symmetric around the dual-unitary family.

Specifically, we consider quantum channels of the form
\begin{align} \label{eq:convex_channel}
 \mathcal E_p = \int d\lambda \, p(\lambda)  U_
 \lambda(\cdot) U_\lambda^\dagger
\end{align}
where $\lambda$ is the collection of parameters specifying the unitary and $p(\lambda)$ is the associated probability density. Let us moreover assume the case of qubits $d=2$, as the so-called standard decomposition~\cite{kraus2001optimal} is crucial for our derivation. It states that every 2-qubit unitary can be explicitly parameterised as
\begin{align} \label{eq:standard_decomposition}
    U_\lambda &= W _1 \otimes W _2  \exp \left(i \sum_{j=1}^3 \theta_j  \sigma_j \otimes \sigma_j \right) V _1 \otimes V _2 ,
\end{align}
where the $\theta_j$ angles determine the nonlocal part and the local unitaries $V$, $W$ are arbitrary. As such, we denote $\lambda = (\lambda_W, \theta, \lambda_V)$ the parameters specifying the $W_1 \otimes W_2$, nonlocal, and $V_1 \otimes V_2$ parts of $U$ and $p(\lambda)$ the corresponding probability density.
\begin{enumerate}[]
    \item \textit{Assumption 1}: 
    \begin{align}
        p(\lambda) = p_{WV}(\lambda_W,\lambda_V) p_\theta(\theta_1,\theta_2,\theta_3)  \,,
    \end{align}
    i.e., the local and the nonlocal parts are independent.
\end{enumerate}
Now we proceed to impose unbiasedness of the distribution around dual-unitaries.  In the standard decomposition, the dual-unitary family corresponds to fixing $\theta_1=\theta_2 = \pi/4$.
\begin{enumerate}[]
    \item \textit{Assumption 2}: 
    \begin{align}
     p_\theta & (\pi/4+\delta_1,\pi/4 + \delta_2,\theta_3) \nonumber \\
    =  p_\theta &  (\pi/4-\delta_1,\pi/4+\delta_2,\theta_3) \nonumber \\ 
     =   p_\theta & (\pi/4+\delta_1,\pi/4-\delta_2,\theta_3) \,,
    \end{align}    
    i.e., the nonlocal part is unbiased around the dual-unitary family. 
\end{enumerate}

We are now ready to state our main result of this section.

\begin{Proposition} \label{prop:qubit_noise}
Under Assumptions~1 and 2, the quantum channel $\mathcal E_p$ is 4-way unital.
\end{Proposition}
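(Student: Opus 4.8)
The statement asserts that under Assumptions 1 and 2, the averaged channel $\mathcal E_p$ is 4-way unital, i.e.\ satisfies (i)-(iv). Since each $U_\lambda$ is a unitary, every gate $U_\lambda \otimes U_\lambda^*$ is automatically both trace-preserving (i) and time-unital (ii), and these properties are preserved under convex combinations (averaging). Hence the whole content of the Proposition is to verify that the averaging restores \emph{space} unitality, i.e.\ left-unitality (iii) and right-unitality (iv), even though the individual $U_\lambda$ need not be space-unital. The plan is therefore to write down the folded gate $q=\int d\lambda\,p(\lambda)\,U_\lambda\otimes U_\lambda^*$ and check condition (iii) directly [together with (iv), which should follow by the same computation with left/right roles swapped].

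\textbf{Reducing to the nonlocal part.} First I would use Assumption 1 to factor the average. Inserting the standard decomposition~\eqref{eq:standard_decomposition} into the folded gate, the local unitaries $W_1\otimes W_2$ and $V_1\otimes V_2$ (and their conjugates) sit at the output and input legs respectively. Because left-unitality~\eqref{eq:leftU} is the condition $\bra{\mcirc\mcirc}\tilde q=\bra{\mcirc\mcirc}$, I expect the averaged local $V$-unitaries to produce some fixed local channel on the input side and the averaged $W$-unitaries a fixed local channel on the output side; the key point is that these local pieces map normalised vectorised identities $\ket{\mcirc}$ to $\ket{\mcirc}$ after the space-time rotation $\sim$, because any single-site channel is unital and trace-preserving on $\ket{\mcirc}$. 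Thus, after integrating over $\lambda_W,\lambda_V$ using $p_{WV}$, the left-unitality of $\mathcal E_p$ reduces to a statement about the $\theta$-averaged \emph{nonlocal} core $\exp(i\sum_j\theta_j\,\sigma_j\otimes\sigma_j)$.

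\textbf{The core computation.} The heart of the proof is to show that the $\theta$-averaged nonlocal gate, after the $\sim$ rotation, is left-(and right-)unital. Writing the folded nonlocal gate explicitly in the Pauli basis, the matrix elements will involve products of $\cos\theta_j,\sin\theta_j$ with definite parities in each $\theta_j$. Under the space-time rotation $\sim$, the space-unitality constraint becomes a sum over Pauli-basis contractions; I expect the ``offending'' terms that violate space-unitality to be exactly those that are \emph{odd} under the reflections $\theta_1\mapsto \pi/4-\theta_1$ or $\theta_2\mapsto \pi/4-\theta_2$ about the dual-unitary point. Assumption 2 states precisely that $p_\theta$ is symmetric under these two reflections, so when integrated against $p_\theta$ those odd contributions vanish, while the surviving even terms reproduce $\bra{\mcirc\mcirc}$ on the rotated gate. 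This is the step I expect to be the main obstacle: one must identify the correct parity structure of the rotated nonlocal gate in the Pauli basis and confirm that \emph{every} term obstructing (iii) and (iv) is killed by one of the two reflection symmetries in Assumption 2, and that the dual-unitary point $\theta_1=\theta_2=\pi/4$ is indeed where the obstruction changes sign. Once the vanishing is established, collecting the surviving terms and re-inserting the (unital, trace-preserving) local averages yields $\bra{\mcirc\mcirc}\tilde q=\bra{\mcirc\mcirc}$ and the analogous right-unital identity, completing (iii) and (iv) and hence 4-way unitality.
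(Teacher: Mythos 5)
Your strategy is essentially the paper's own proof: (i)--(ii) are automatic for a convex mixture of unitary channels, Assumption~1 decouples the local from the nonlocal average, the local unitaries on the contracted site drop out by unitality/trace preservation, and the obstruction to space unitality sits entirely in the $\theta$-average of the nonlocal core, where the offending terms are odd under the reflections $\delta_{1,2}\mapsto -\delta_{1,2}$ about the dual-unitary point and are therefore killed by Assumption~2. The one substantive step you leave as an expectation --- that contracting one site reduces the nonlocal core to the diagonal coefficients $|\alpha_j|^2$, each of the form $1+(\text{terms odd in }\sin 2\delta_1\text{ or }\sin 2\delta_2)$, so that the surviving average is the single-qubit depolarizing channel, which then absorbs the remaining local unitaries $W_1,V_1$ regardless of $p_{WV}$ --- is exactly the explicit trigonometric computation the paper performs, and it confirms your conjectured parity structure; note only that the relevant reflection is $\theta_1\mapsto\pi/2-\theta_1$ (i.e.\ $\delta_1\mapsto-\delta_1$), not $\theta_1\mapsto\pi/4-\theta_1$ as written.
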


The proof can be found in Appendix~\ref{app:averaging}.

Conceptually, Proposition~\ref{prop:qubit_noise} provides a simple model where suitable symmetry in the noise leads to solvability (as explored in Sec.~\ref{sec:applications}), \textit{even in the absence of dual-unitarity}.
This could provide an explanation why experimental realisations \cite{chertkov2021holographic, mi2021information} of dual-unitary dynamics are robust in the presence of unbiased errors. In particular, the errors do not destroy solvability and the particular features of zero correlations inside the light-cone are preserved. 

\subsection{Minimal example}

Next, we illustrate the introduced families and the effect of the different space-time unitality constraints on the simplest example, where the local folded space is two-dimensional. Starting with qubits, the folded space is spanned by the identity and three Pauli matrices. The minimal example arises from using single-qubit dephasing channels after each gate~\cite{kos2021correlations,kos2021chaos}. They project out two of the operators, for example $\sigma_1,\sigma_2$.
Explicitly, the projectors are given by
\be
P_{m}(\rho)= \sum_k^\infty e^{i \phi_{k} \sigma_3} \rho e^{-i \phi_{k} \sigma_3} \, ,
\ee
where $\phi_k$ are random i.i.d. phases. This correspond to an  averaged evolution intertwined by random space and time magnetic fields in the z-direction. 
We organise the basis as $\ket{\1\1}/2, \ket{\1\sigma_3}/2, \ket{\sigma_3\1}/2,\ket{\sigma_3\sigma_3}/2$ and write the resulting channel in this basis.

Since the condition (i) is always valid, it fixes the first row and the local gate can be written as
\begin{equation}
q = \begin{pmatrix}
    1 & 0 & 0 & 0\\
v_1 & \varepsilon_1 & a & b\\
v_2 & c & \varepsilon_2 & d\\
v_3 & e & f &  g
    \end{pmatrix}\!.
\label{eq:reducedgates}    
\end{equation}
Complete positivity is reflected in non-trivial constraints of $12$ real parameters with absolute size smaller than one. The additional conditions fix some of the parameters to zero. In particular, condition (ii) fixes $v_i=0$, (iii) $v_1=0, \varepsilon_1=0$ and (iv) $v_2=0, \varepsilon_2=0$. The additional conditions (v) and (vi) for a perfect channel fix $a=c=0$.

In the unital case (ii), we recover the example of averaged unitary evolution~\cite{kos2021correlations,kos2021chaos}, this time interpreted as a local quantum channel. Interestingly, complete positivity in this example is equivalent to the condition that the matrix is bistochastic after a Hadamard transformation~\cite{kos2021correlations}. This shows that this corresponds to a classical Markov chain problem.
One way this unital example arises is by taking all $\lambda_k$ from Eq.~\eqref{eq:unital_channel_mendl} to be identical and 
\be
U_k = (e^{i \phi_{k} \sigma_3} \otimes e^{i \phi_{k+1/2} \sigma_3})\cdot U \cdot (e^{i \phi_{2k} \sigma_3} \otimes e^{i \phi_{2k+1/2} \sigma_3}),
\label{eq:UbarSM}
\ee
with $U$ fixed and $\phi_{k}$ random. 
If $U$ is dual unitary, then $\varepsilon_1= \varepsilon_2=0$ and $q$ is 4-way unital.


\section{Applications}
\label{sec:applications}
Having introduced and characterised the new families of models, we move to answer questions about their dynamics. In particular, we use their additional constraints for simplifications of the calculation of the spatio-temporal correlation functions, spatial correlation functions after a quench, and steady states.

\subsection{Spatio-temporal correlation functions}

Consider spatio-temporal correlations between single-site traceless Hilbert-Schmidt normalised operators\footnote{Normalised as $\Tr (a a^\dagger) =1$.} $a$ and $b$ starting from an infinite temperature steady state $\rho_{\infty}= \1/d^{2L}$. In the case of unital quantum channels, this is the general time translation invariant state. The correlations thus read
\begin{align}
\braket{a_i(t) b_j} &\equiv	d \ \Tr \left( [a_i(t)]^\dagger b_j \frac{\1}{d^{2L}} \right)\notag \\
&= \bra{\mcirc \ldots a \ldots \mcirc} \mathbb{Q}^t \ket{\mcirc \ldots b \ldots \mcirc},
\end{align}
where we included the factor of $d$ in the definition such that the auto-correlation function at time $0$ is normalised to one.

The following is a simple generalisation of the exact calculation of the correlation functions of dual-unitary evolution~\cite{bertini2019exact}. It was already pointed out in~\cite{kos2021correlations} that the simplifying property is dual-unitality, which in those cases followed from the dual-unitarity\footnote{A simple way to see this is by taking the unitary gate as the only Kraus operator or by looking at the graphical rules in the folded picture.}. In our setting, in contrast to the works mentioned previously, dual-unitarity might actually be absent, and moreover unitality conditions can hold in less than four directions.

The correlations in the folded picture are graphically expressed as
\begin{align}
&\braket{a_i(t) b_j}=
\begin{tikzpicture}[baseline=(current  bounding  box.center), scale=0.55]
\foreach \jj[evaluate=\jj as \j using -2*(ceil(\jj/2)-\jj/2)] in {0,...,3}
\foreach \i in {1,...,4}
{\Wgategreen{2*\i+\j}{\jj}}
\foreach \i in {2,...,9}{
\MYcircle{\i-.5}{-0.5}
\MYcircle{\i-1.5}{4-0.5}
}
\MYcircleB{3.5}{-.5}
\MYcircleB{4.5}{4-.5}
\Text[x=3.5,y=-1.0]{$b$}
\Text[x=4.5,y=4.0]{$a$}
\end{tikzpicture}\, ,
\label{eq:Corr1}
\end{align}
where we consider $L>2t$, and the black dots correspond to the operators $a$ and $b$.
First we repeatedly apply condition (i), resulting in a backwards light-cone
\begin{align}
&\braket{a_i(t) b_j} =
\begin{tikzpicture}[baseline=(current  bounding  box.center), scale=0.55]
\foreach \i in {1,...,4}{
\Wgategreen{2*\i}{0}}
\foreach \i in {1,...,3}{
\Wgategreen{2*\i+1}{1}}
\foreach \i in {2,...,3}{
\Wgategreen{2*\i}{2}}
\Wgategreen{2*2+1}{3}
\foreach \i in {2,...,9}{
\MYcircle{\i-.5}{-0.5}}
\foreach \i in {2,...,5}{
\MYcircle{\i-.5}{-0.5+\i-1}
\MYcircle{\i-.5+4}{-0.5-\i+6}}
\MYcircleB{3.5}{-.5}
\MYcircleB{4.5}{4-.5}
\Text[x=3.5,y=-1.0]{$b$}
\Text[x=4.5,y=4.0]{$a$}
\end{tikzpicture}\, .
\label{eq:Corr2}
\end{align}
If $|i-j| > t$, the operator $b$ lies outside the light-cone and is contracted with $\ket{\mcirc}$. Since $b$ is traceless, the correlations are exactly zero in this case.

Assuming the left unitality condition (iii), the correlations simplify further: 
\begin{align} 
&\braket{a_i(t) b_j} =
\begin{tikzpicture}[baseline=(current  bounding  box.center), scale=0.55]
\foreach \i in {2,...,4}{
\Wgategreen{2*\i}{0}}
\foreach \i in {2,...,3}{
\Wgategreen{2*\i+1}{1}}
\foreach \i in {3,...,3}{
\Wgategreen{2*\i}{2}}
\Wgategreen{2*2+1}{3}
\foreach \i in {4,...,9}{
\MYcircle{\i-.5}{-0.5}
}
\foreach \i in {2,...,3}{
\MYcircle{\i-.5+2}{-0.5+\i-1}
}
\foreach \i in {2,...,5}{
\MYcircle{\i-.5+4}{-0.5-\i+6}
}
\MYcircleB{3.5}{-.5}
\MYcircleB{4.5}{4-.5}
\MYcircle{4.5}{2.5}
\Text[x=3.5,y=-1.0]{$b$}
\Text[x=4.5,y=4.0]{$a$}
\end{tikzpicture}\, .
\label{eq:Corr3}
\end{align}
Notice that if $i$ would be an integer or $j$ half-integer and $|i-j| \neq t$, the correlations would be zero. For example, in the case of $b$ at the left bottom half-integer position, we would perform the simplification
\be
\begin{tikzpicture}[baseline=(current  bounding  box.center), scale=.7]
\Wgategreen{0}{0}
\MYcircle{-.5}{-.5}
\MYcircle{-.5}{0.5}
\MYcircleB{.5}{-.5}
\Text[x=1.00,y=-0.50]{$b$}
\end{tikzpicture}\,  =
\begin{tikzpicture}[baseline=(current  bounding  box.center), scale=.7]
\draw[thick](-.5,-.5)--(.5,-.5);\draw[thick](-.5,.5)--(.5,.5);
\MYcircle{-.5}{.5}
\MYcircle{-.5}{-.5}
\MYcircleB{.5}{-.5}
\Text[x=1.0,y=-.5]{$b$}
\end{tikzpicture}=0,
\ee
where the last equality follows from $\Tr b=0$.

Notice that conditions (i) and (iii) are not always enough to simplify Eq.~\eqref{eq:Corr3} to an efficiently computable expression. 
In particular, the expression may still contain proportional to $t$ many gates in the bottom layer. Thus an exact evaluation may be exponentially hard in $t$.

\subsubsection{3-way space unitality}
Assuming the right unitality condition (iv), we simplify the diagram also from the right side. Therefore we obtain an analogue of the dual-unitary result~\cite{bertini2019exact}, even without assuming (time) unitality. The correlations are exactly equal to zero, except if they lie on the same light ray.
An example of non-zero correlation is:
\begin{align}
&\braket{a_i(t) b_j} =
\begin{tikzpicture}[baseline=(current  bounding  box.center), scale=0.55]
\Wgategreen{2*2}{0}\Wgategreen{2*2+1}{1}
\Wgategreen{2*3}{2}\Wgategreen{7}{3}
\foreach \i in {2,...,5}{
\MYcircle{\i-.5+2}{-0.5+\i-1}
\MYcircle{\i-.5+4-1}{-0.5+\i-2}
}
\MYcircleB{3.5}{-.5}
\MYcircleB{7.5}{4-.5}
\Text[x=3.5,y=-1.0]{$b$}
\Text[x=7.5,y=4.0]{$a$}
\end{tikzpicture}\, .
\label{eq:Corr40}
\end{align}
These correlations are expressed using a single qudit channels
\begin{align}
&\mathcal{M}_+ =
\begin{tikzpicture}[baseline=(current  bounding  box.center), scale=0.55]
\Wgategreen{0}{0}
\MYcircle{.5}{-.5}
\MYcircle{-.5}{.5}
\end{tikzpicture}\, , \quad
&\mathcal{M}_- =
\begin{tikzpicture}[baseline=(current  bounding  box.center), scale=0.55]
\Wgategreen{0}{0}
\MYcircle{.5}{.5}
\MYcircle{-.5}{-.5}
\end{tikzpicture}\, ,
\label{eq:MapsM}
\end{align}
which can be expressed as (non-vectorised) superoperators
\begin{subequations}
\begin{align}
    \mathcal{M}_+ (o) &= \Tr_1 \left(  \sum_k E_k (o \otimes \1) E_k^\dagger \right) 
    \, , \\
    \mathcal{M}_- (o) &= \Tr_2 \left(  \sum_k E_k ( \1 \otimes o) E_k^\dagger \right) 
    \, .
\end{align}
\end{subequations}
%
The correlations are thus equal to:
\begin{align}\label{eq:corr0}
&\braket{a_i(t) b_j}_{\mathrm{lr}} = \\
&\delta_{|i-j|,t}\left(\delta_{j \in \mathbb{Z}} \bra{a} \mathcal{M}_+^{2t} \ket{b} + \delta_{(\frac{1}{2}+j) \in \mathbb{Z}} \bra{a} \mathcal{M}_-^{2t} \ket{b} \right) . \notag
\end{align}
We introduced the subscript $\mathrm{lr}$ (light ray) for later reference.
Since this expression only involves actions of single qudit channels, it can be evaluated efficiently. Notice that even though $q$ may be non-unital, $\mathcal{M}_\pm$ are always both trace-preserving and unital, as a consequence of the space unitality conditions.

We see that 3-way space unitality and 4-way unitality show the same physical behaviour of the correlation functions as in dual-unitary circuits, even-though the former cases are much more general. In particular, correlations stay exactly zero inside the light cone. Moreover, the correlations on the light ray pinpoint  the fact that the Lieb-Robinson velocity of these circuits is the maximal one allowed by the geometry \cite{claeys2020maximum}. 

The long time behaviour of the correlations will be given by the leading eigenvalue of $\mathcal{M_\pm}$. Generically, the correlations on the light ray will decay exponentially, signalling ergodic and mixing behaviour, but one can recover the ergodic hierarchy, as in the dual-unitary case~\cite{bertini2019exact,claeys2021ergodic, aravinda2021from}. 
In contrast to the generic behaviour, all or some correlations might remain constant in time corresponding to noninteracting and nonergodic behaviour, respectively. Or some correlations may remain oscillating, corresponding to ergodic but nonmixing behaviour.
Here we can also observe the so-called Bernoulli case even for qubits, where all correlations decay to zero in one time step, in contrast to dual-unitaries for qubits \cite{aravinda2021from}.

\subsubsection{3-way time unitality}
\label{sec:3wayunitality}
Instead, if we assume conditions (i)-(iii), the situation again simplifies, but with a slightly richer behaviour of correlations. In particular, the configuration of Eq.~\eqref{eq:Corr3} simplifies from the bottom up and gives non-zero correlations
\begin{align}
&\braket{a_i(t) b_j} =
\begin{tikzpicture}[baseline=(current  bounding  box.center), scale=0.55]
\Wgategreen{2*2}{0}
\Wgategreen{2*2+1}{1}
\Wgategreen{2*3}{2}
\Wgategreen{2*2+1}{3}
\foreach \i in {4,...,5}{
\MYcircle{\i-.5}{-0.5}
}
\foreach \i in {2,...,3}{
\MYcircle{\i-.5+2}{-0.5+\i-1}
\MYcircle{\i-.5+4}{-0.5+\i-1}
}
\MYcircleB{3.5}{-.5}
\MYcircleB{4.5}{4-.5}
\MYcircle{4.5}{2.5}
\MYcircle{5.5}{3.5}
\MYcircle{6.5}{2.5}
\Text[x=3.5,y=-1.0]{$b$}
\Text[x=4.5,y=4.0]{$a$}
\end{tikzpicture}\, .
\label{eq:Corr4}
\end{align}
This expression generalises the particular solvable family of half-dual-unitary reduced gates from~\cite{kos2021correlations}.
The expression contains $\mathcal{E}_2$, one of the two additional single qudit channels 
\begin{align}
&\mathcal{E}_1 =
\begin{tikzpicture}[baseline=(current  bounding  box.center), scale=0.55]
\Wgategreen{0}{0}
\MYcircle{-.5}{-.5}
\MYcircle{-.5}{.5}
\end{tikzpicture}\, , \quad
&\mathcal{E}_2 =
\begin{tikzpicture}[baseline=(current  bounding  box.center), scale=0.55]
\Wgategreen{0}{0}
\MYcircle{.5}{.5}
\MYcircle{.5}{-.5}
\end{tikzpicture}\, .
\label{eq:MapsM2}
\end{align}
Explicitly, $\mathcal{E}_2$ as a (non-vectorised) superoperator reads
\begin{align}
    \mathcal{E}_2 (o)= \Tr_2 \left( \sum_k E_k (o \otimes \1) E_k^\dagger \right) 
    \, .
\end{align} 
Since here we assumed (iii), $\mathcal{E}_1$ is trivial. However, its role becomes important if one assumes (iv) instead of (iii).
The complete expression for correlations thus reads (for $|i-j|<t$)
\begin{align}
&\braket{a_i(t) b_j}_n =\\
&\delta_{j \in \mathbb{Z}} \delta_{(\frac{1}{2}+i)\in \mathbb{Z}}
\bra{a} \mathcal{M}_-^{t-(i-j)-\frac{1}{2}}\mathcal{E}_2\mathcal{M}_+^{t+i-j-\frac{1}{2}}\! \ket{b}
, \notag
\end{align}
where we defined $\braket{a_i(t) b_j}_n = \braket{a_i(t) b_j} - \braket{a_i(t) b_j}_{\mathrm{lr}}$ [cf. \eqref{eq:corr0}]. This can be easily evaluated using only single qudit channels. 
If instead we assume conditions (i), (ii), and (iv), we get similar diagrams, but this time the evolution starts by going left with a possible turn to the right. It reads (for $|i-j|<t$)
\begin{align}
&\braket{a_i(t) b_j}_n =\\
&\delta_{(\frac{1}{2}+j)\in \mathbb{Z}}\delta_{i \in \mathbb{Z}}  \bra{a} \mathcal{M}_+^{t+i-j-\frac{1}{2}} \mathcal{E}_1\mathcal{M}_-^{t-(i-j)-\frac{1}{2}} \ket{b} . \notag
\end{align}

As we have seen, for the time-unital local channels it is enough to demand only one of the space unitality conditions to simplify the expressions. Indeed, the resulting equations contain only single qudit channels, which can be efficiently evaluated. The resulting correlations are, in contrast to 3-way space unitality and dual-unitarity, non-zero inside the light cone between operators on different sub-lattices (integers or half integers). Nevertheless, there are still non-zero correlations along the light ray, suggesting the maximal Lieb-Robinson velocity.

More concretely, let us assume that $\mathcal{M}_\pm$ have $\ket{\circ}$ as their unique leading eigenvector (corresponding to eigenvalue one) and no other eigenvalue of the same modulus. Denote as $\lambda_\pm$ the sub-leading eigenvalues. Then the correlations, when they are non-zero, behave as
\begin{align}
&\braket{a_i(t) b_j}_n \sim (\lambda_+ \lambda_-)^t \left( \frac{\lambda_+}{\lambda_-} \right)^{i-j} \, .
\end{align}
Therefore at fixed $t$ the correlations grow or decay exponentially in $i-j$, depending on the ratio of the eigenvalues. This contrasts the zero correlations inside the light-cone that appear in the case of 4-way unital channels and dual-unitaries.

\subsection{Spatial correlation functions after a quench} \label{sec:corr_fun_quench}
Next, we focus on spatial correlations between single-site operators after a quantum quench starting from an initial density matrix $\rho(0)$:
\begin{align}
C_{ij}(t) \  \equiv \
&  \ d \ \Tr a_i b_j \rho(t) 
=\\
&\bra{\mcirc \ldots a\mcirc \ldots b \ldots \mcirc}\mathbb{Q}^t \ket{\rho(0)}. \notag
\end{align}
Without loss of generality we assume $i<j$.

For the purpose of presentation, let us restrict ourselves to a two sites translation-invariant initial states. The discussion straightforwardly extends to the non-transnational invariant case. 
Without loss of generality we write them as a matrix product density operator (MPDO) in a vectorised form
\begin{align}
\ket{\rho(0)} &= \!\! \sum_{s_{\frac{1}{2}}, \ldots,s_L}  \!\! \Tr [T^{(s_{\frac{1}{2}},s_1)} \ldots T^{(s_{L-\frac{1}{2}},s_L)}] \ket{s_{\frac{1}{2}} \ldots s_L} \nonumber \\
&=
\frac{1}{d^{L}} \;
\begin{tikzpicture}[baseline=(current  bounding  box.center), scale=.65]
\draw[very thick] (-0.5,0) -- (4.5,0);
\draw[very thick] (-0.5,0.15) arc (90:270:0.075);
\draw [very thick, dotted] (4.5,.0) -- (5.5,.0);
\draw[very thick] (5.5,0.15) arc (90:-90:0.075);
\foreach \i in {0,1.35,2.7,4.05}
{
\rhoO{\i}{0}}
\end{tikzpicture}\,.
\label{eq:quench1}
\end{align}
The state is defined through $\chi \times \chi$ matrices $T^{(s_{i-1/2},s_i)}$ (green boxes).
In the diagram the horizontal line represents the \textit{auxiliary} space of bond dimension $\chi$. 
The MPDO is taken to be normalized in the thermodynamic limit
\be \label{eq:mpdo_normalisation}
\lim_{L\to \infty} \Tr \rho(0)  = \lim_{L\to \infty} \Tr E(0)^L=1 ,
\ee
meaning that the space transfer matrix 
 \be \label{eq:transfer_matrix_def}
 E(0)= \,\,\begin{tikzpicture}[baseline=(current  bounding  box.center), scale=.75]
\rhoO{0}{0}
\MYcircle{-0.5}{.5}
\MYcircle{.5}{.5}
\end{tikzpicture}
 \ee
has a unique fixed point with eigenvalue one, which we denote by $\ket{\triangle}\bra{\square}$, i.e.,
\begin{align} \label{eq:solvable_mpdo_fixed_point}
    \lim_{L\to \infty} E(0)^L= \ket{\triangle}\bra{\msquare} .
\end{align}
with $\braket{\square|\triangle}=1$.
%

Graphically, we express the correlations after time $t$ as: 
\begin{align}\label{eq:quench2}
&C_{ij}(t) = \begin{tikzpicture}[baseline=(current  bounding  box.center), scale=0.6]
\foreach \i in {0,...,1}{
\Wgategreen{-2}{0+2*\i}\Wgategreen{0}{0+2*\i}\Wgategreen{2}{0+2*\i}\Wgategreen{4}{0+2*\i}\Wgategreen{6}{0+2*\i}
\Wgategreen{-1}{1+2*\i}\Wgategreen{1}{1+2*\i}\Wgategreen{3}{1+2*\i}\Wgategreen{5}{1+2*\i}\Wgategreen{7}{1+2*\i}
}
\draw[very thick] (-2.25,-1) -- (7.5,-1);
\foreach \i in {2.5,4.5,6.5,8.5,10.5}
{\rhoO{-3.5+\i}{-1}}
 \foreach \i in {-1,...,8}{
 \draw[thick, fill=white] (\i-0.5,4-0.5) circle (0.1cm); 
}
\MYcircleB{7.5}{3.5}
\MYcircleB{-1.5}{3.5}
\Text[x=-1.5,y=4.0]{$a$}
\Text[x=7.5,y=4.0]{$b$}
\end{tikzpicture}.
\end{align}
Next, we use condition (i), which results in two backward propagating light cones originating at the positions of the operators, connected by two parts of the circuit consisting of powers of the transfer matrix $E(0)$:
\bw
\begin{align}
C_{ij}(t)=
\begin{tikzpicture}[baseline=(current  bounding  box.center), scale=0.6]
\Wgategreen{-4}{0}\Wgategreen{-2}{0}
\Wgategreen{0}{0}\Wgategreen{2}{0}\Wgategreen{4}{0}\Wgategreen{6}{0}\Wgategreen{8}{0}\Wgategreen{10}{0}
\Wgategreen{-1}{3}\Wgategreen{7}{3}
\Wgategreen{-2}{2}\Wgategreen{0}{2}\Wgategreen{6}{2}\Wgategreen{8}{2}
\Wgategreen{-3}{1}\Wgategreen{-1}{1}\Wgategreen{1}{1}\Wgategreen{5}{1}\Wgategreen{7}{1}\Wgategreen{9}{1}
\draw[very thick] (-6.5,-1) -- (13.5,-1);
\draw[very thick,dotted] (-8.5,-1) -- (-6.5,-1);
\draw[very thick,dotted] (13.5,-1) -- (14.5,-1);
\foreach \i in {-3.5,-1.5,0.5,2.5,4.5,6.5,8.5,10.5,12.5,14.5,16.5}
{
\rhoO{-3.5+\i}{-1}
}
 \foreach \i in {0,...,3}{
\MYcircle{\i-.5}{3.5-\i}
\MYcircle{\i+4-.5}{0.5+\i}
\MYcircle{\i-.5+8}{3.5-\i}
\MYcircle{\i-4-.5}{0.5+\i}
}
 \MYcircle{-5.5}{-.5}
  \MYcircle{-6.5}{-.5}
   \MYcircle{-7.5}{-.5}
 \MYcircle{11.5}{-.5}
 \MYcircle{12.5}{-.5}
 \MYcircle{13.5}{-.5}
\MYcircleB{7.5}{3.5}
\MYcircleB{-1.5}{3.5}\Text[x=-1.5,y=4.0]{$a$}
\Text[x=7.5,y=4.0]{$b$}
\end{tikzpicture}.
\label{eq:quench4}
\end{align}
\ew
In our particular example, we have only one $E(0)$ in the middle, but in general the number of $E(0)$ depends on the separation\footnote{Remember that $j$ is an integer, $i$ a half-integer, and $i<j$.} $j-i$ as $j-i-2t+\frac{1}{2}$.
In the thermodynamic limit we can substitute the outer part of the MPDO using Eq.\eqref{eq:solvable_mpdo_fixed_point}.

The above discussion assumed periodic boundary conditions, where we needed to take $L$ to be large. If instead we consider open boundary conditions with completely depolarising channels at the edges, we can perform simplifications and obtain the exact same expression even for finite systems. We need also to specify the boundary tensors for MPDO, which need to be $\msquare$ and $\triangle$ at the left and right boundary respectively.

We have already vastly simplified the expressions, but the evaluation of the left/right light cone can in principle still be exponentially (in $t$) hard.
To streamline the expression further we need to contract the left and right edges,  
\begin{tikzpicture}[baseline=-1.5pt, scale=.5]
\rhoO{0}{0}
\MYsquare{-0.5}{-.0}\MYcircle{-.5}{.5}
\end{tikzpicture}
and 
\begin{tikzpicture}[baseline=-1.5pt, scale=.5]
\rhoO{0}{0}
\MYtriangle{0.5}{-.0}\MYcircle{.5}{.5}
\end{tikzpicture}.
This leads us to define that a MPDO is \emph{solvable} if the following equations hold 
\be
 \begin{tikzpicture}[baseline=(current  bounding  box.center), scale=.75]
\rhoO{0}{0}
\MYsquare{-0.5}{-.0}\MYcircle{-.5}{.5}
\end{tikzpicture}= \begin{tikzpicture}[baseline=(current  bounding  box.center), scale=.75]
\draw[very thick] (-0.5,0.5)--(-.3,0.7);
\draw[very thick] (-0.5,0)--(-.3,0);
\MYsquare{-0.5}{-.0}\MYcircle{-.5}{.5}
\end{tikzpicture}, 
\qquad
 \begin{tikzpicture}[baseline=(current  bounding  box.center), scale=.75]
\rhoO{0}{0}
\MYtriangle{0.5}{-.0}\MYcircle{.5}{.5}
\end{tikzpicture}= \begin{tikzpicture}[baseline=(current  bounding  box.center), scale=.75]
\draw[very thick] (0.5,0.5)--(.3,0.7);
\draw[very thick] (0.5,0)--(.3,0);
\MYtriangle{0.5}{-.0}\MYcircle{.5}{.5}
\end{tikzpicture}
\label{eq:cond0State}\; .
 \ee

In this way, we arrived at a generalisation of the initial solvable matrix product states (MPS) introduced in~\cite{piroli2020exact} for pure states and dual-unitary evolution.
In that case, the solvable MPS were classified by the connection that their local tensors are unitaries acting on the combined auxiliary and physical space. 
We make an analogue classification for MPDO with a local purification in the next section~\ref{sec:MPDO}. In this section we take the conditions as given, and derive exact results using them.

If our local channels satisfy conditions (iii) or (iv), we can use them together with the MPDO solvability condition \eqref{eq:cond0State} to simplify the left or right light cone correspondingly.
When simplifying left or right light cone, we obtain zero if $i$ is integer or $j$ half-integer, respectively.
Assuming both (iii) and (iv), we simplify both parts and obtain for our particular positions of $a$ and $b$
\be
C_{ij}(t)=
\begin{tikzpicture}[baseline=(current  bounding  box.center), scale=0.6]
\Wgategreen{2}{0}\Wgategreen{4}{0}
\Wgategreen{-1}{3}\Wgategreen{7}{3}
\Wgategreen{0}{2}\Wgategreen{6}{2}
\Wgategreen{1}{1}\Wgategreen{5}{1}
\rhoO{3}{-1}
\foreach \i in {0,...,3}{
\MYcircle{\i-.5}{3.5-\i}
\MYcircle{\i-.5-1}{3.5-\i-1}
\MYcircle{\i+4-.5}{0.5+\i}
\MYcircle{\i+4-.5+1}{0.5+\i-1}
}
\MYtriangle{3.5}{-1}
 \MYsquare{2.5}{-1}
\MYcircleB{7.5}{3.5}
\MYcircleB{-1.5}{3.5}\Text[x=-1.5,y=4.0]{$a$}
\Text[x=7.5,y=4.0]{$b$}
\end{tikzpicture}.
\label{eq:quench5}
\ee
Instead, if the separation $j-i$ is shorter than $2t$, we obtain
\be
C_{ij}(t)=
\begin{tikzpicture}[baseline=(current  bounding  box.center), scale=0.6]
\Wgategreen{-1}{3}\Wgategreen{5}{3}
\Wgategreen{0}{2}\Wgategreen{4}{2}
\Wgategreen{1}{1}\Wgategreen{3}{1}
\Wgategreen{2}{0}
\foreach \i in {0,...,2}{
\MYcircle{\i-.5}{3.5-\i}
\MYcircle{\i-.5-1}{3.5-\i-1}
\MYcircle{\i+4-.5-1}{0.5+\i+1}
\MYcircle{\i+4-.5+1-1}{0.5+\i}
}
\MYcircle{1.5}{-0.5}
\MYcircle{2.5}{-0.5}
\MYcircleB{5.5}{3.5}
\MYcircleB{-1.5}{3.5}\Text[x=-1.5,y=4.0]{$a$}
\Text[x=5.5,y=4.0]{$b$}
\end{tikzpicture}.
\label{eq:quench6}
\ee
Both kinds of expressions can be efficiently evaluated as actions of single qudit channels $\mathcal{M}_{\pm}$. For $i$ half-integer, $j$ integer and separation $s=j-i$, the result reads:
\be
\! C_{ij}(t)\!=\!
\begin{cases}
\bra{ab}( \mathcal{M}_- \otimes \mathcal{M}_+)^{2t}\ket{I(s-2t+\frac{1}{2})}\!, s > 2t,
\\
\bra{ab}( \mathcal{M}_- \otimes \mathcal{M}_+)^{s-\frac{1}{2}}q\ket{\mcirc\mcirc}, \ \ s < 2t,
\end{cases}
\ee
where 
\be
\ket{I(x)}=
\begin{cases}
 \begin{tikzpicture}[baseline=(current  bounding  box.center), scale=0.6]
\rhoO{3}{-1}
\MYtriangle{3.5}{-1}
 \MYsquare{2.5}{-1}
\end{tikzpicture},
\ \ x=1, \\
 \begin{tikzpicture}[baseline=(current  bounding  box.center), scale=0.6]
  \draw[very thick, dotted] (0,-1)-- (2,-1);
  \draw[very thick, dotted] (4,-1)-- (6,-1);
 \draw[very thick] (2,-1)-- (4,-1);
\rhoO{0}{-1} \rhoO{2}{-1} \rhoO{4}{-1} \rhoO{6}{-1}
\MYtriangle{6.5}{-1} \MYsquare{-.5}{-1}
 \MYcircle{0.5}{-.5}  \MYcircle{1.5}{-.5}
  \MYcircle{2.5}{-.5}   \MYcircle{3.5}{-.5}
  \MYcircle{4.5}{-.5}  \MYcircle{5.5}{-.5}
   \draw [thick, black,decorate,decoration={brace,amplitude=2.5pt,mirror},xshift=0.0pt,yshift=-0.0pt](1.4,-1.2) -- (4.6,-1.2) node[black,midway,yshift=-0.35cm] {$x-2$ times $E(0)$};
   \Text[x=1.5,y=.3]{}
\end{tikzpicture},
\ \ x>1.
\end{cases}
\ee

Notice that for $j-i<2t$ there is no information about the initial state (cf. \eqref{eq:quench6}). This means that we obtained exact expressions for the two-point correlation functions in the steady state, which is reached from all solvable initial states. We elaborate on this point in the following subsection.
Since we did not use condition (ii) anywhere, the whole evolution might be non-unital and these correlations could be non-trivial. This is different from the dual-unitary case, where the evolution is unital and correlations are exactly zero for $j-i<2t$~\cite{piroli2020exact}. If, as at the end of section \ref{sec:3wayunitality}, we assume that $\lambda_\pm$ are the unique  sub-leading eigenvalues of $\mathcal{M}_\pm$, the correlations decay exponentially in the separation as:
\begin{align}
    C_{ij}(t) \sim (\lambda_+ \lambda_-)^{j-i}\, .
\end{align}
We can understand this as correlations which are created in the past and then propagated ballistically with exponential damping.  

Let us also comment a bit more on the case $j-i\geq2t$. Here, if we additionally assume $\lambda_0$ to be the unique sub-leading eigenvalue of $E(0)$, the correlations decay exponentially as:
\begin{align}
   C_{ij}(t) \sim (\lambda_+ \lambda_-)^{2t} \lambda_0^{j-i-2t} \, .
\end{align}
We can see that the exponential decay in the additional separation $j-i-2t$ originates from the exponential decay in the initial MPDO. If the initial MPDO is instead uncorrelated at such distance, the correlations after the time evolution also vanish directly due to locality.

\subsection{Non-equilibrium steady state} \label{sec:noneq_steady_st}

Here we focus on the evolution of solvable MPDO at long times, when the dynamics obeys different unitality conditions. In particular, we provide exact expressions for the steady state and examine its uniqueness, provided that we start from solvable states. Remarkably, we find that steady states can be reached exactly in time of order of the system size. This does not necessarily imply that the steady state is unique if we start from general initial states. 
For instance, for an integrable model such as the dual-unitary transverse-field Ising model, the steady state from solvable initial states is unique (infinite temperature state). But the steady state is not unique in general, as it is given by the Generalised Gibbs Ensemble~\cite{vidmar2016generalized} due to the presence of conserved charges.

We show uniqueness of the steady state from solvable MPDOs in two cases:
\begin{itemize}
        \item[(1.)] Open boundary conditions with completely depolarising channels $\ket{\mcirc}\bra{\mcirc}$ at the edges. Here taking just one of the left or right unitality conditions suffices, i.e., 2-way unitality. More concretely, the evolution is given by:
    \begin{align}
        \mathbb{Q} = (\ket{\mcirc}\bra{\mcirc} \otimes q^{\otimes L})(  q^{\otimes L}\otimes \ket{\mcirc}\bra{\mcirc}).
    \end{align}
   Also, the left and right boundary tensors of the MPDO need to be contracted with $\msquare$ and $\triangle$.
    \item[(2.)] We focus on a finite region $A$ in the limit of infinite system size $L \to \infty$. We need both the left and right unitality conditions here, i.e., we need to assume 3-way space unitality.
\end{itemize}
Moreover, we show that the steady state is reached exactly in a finite time $t\leq|A|$, where $|A|$ is the (sub)system length ($A$ contains $2|A|$ qudits).

Focusing first on (1.), the evolution is given by 
\begin{align}
\ket{\rho(t)} \!=\!
\begin{tikzpicture}[baseline=(current  bounding  box.center), scale=0.6]
\foreach \i in {0,...,3}{
\Wgategreen{0}{0+2*\i}\Wgategreen{2}{0+2*\i}\Wgategreen{4}{0+2*\i}\Wgategreen{6}{0+2*\i}
\Wgategreen{-1}{1+2*\i}\Wgategreen{1}{1+2*\i}\Wgategreen{3}{1+2*\i}\Wgategreen{5}{1+2*\i}\Wgategreen{7}{1+2*\i}
}
  \foreach \i in {-0.5,...,6.5}{
\draw[very thick] (\i, 7.5) -- (\i,7.75);
}
\foreach \j in {-0.5,...,7.5}{
 \MYcircle{-1.5}{\j}
 \MYcircle{7.5}{\j}
}
\draw [loosely dotted] (-2,-.75) -- (8,-.75);
\end{tikzpicture},
\label{eq:NESS1}
\end{align}
where the dots denote the omitted part of the evolution. Next, we use left unitality (iii), the first of the conditions~\eqref{eq:cond0State} and that $t\geq|A|$. The expression simplifies to
\begin{align}
\ket{\rho(t)} =
\begin{tikzpicture}[baseline=(current  bounding  box.center), scale=0.6]
\foreach \i in {3}{
\Wgategreen{2}{0+2*\i}\Wgategreen{4}{0+2*\i}\Wgategreen{6}{0+2*\i}
\Wgategreen{1}{1+2*\i}\Wgategreen{3}{1+2*\i}\Wgategreen{5}{1+2*\i}\Wgategreen{7}{1+2*\i}
}
\foreach \i in {2}{
\Wgategreen{4}{0+2*\i}\Wgategreen{6}{0+2*\i}
\Wgategreen{3}{1+2*\i}\Wgategreen{5}{1+2*\i}\Wgategreen{7}{1+2*\i}
}
\foreach \i in {1}{
\Wgategreen{6}{0+2*\i}
\Wgategreen{5}{1+2*\i}\Wgategreen{7}{1+2*\i}
}
\Wgategreen{7}{1}
  \foreach \i in {-0.5,...,6.5}{
\draw[very thick] (\i, 7.5) -- (\i,7.75);
}
\foreach \j in {0.5,...,7.5}{
  \MYcircle{7.5}{\j}
  \MYcircle{7-\j}{\j}
}
\end{tikzpicture}.
\label{eq:NESS2}
\end{align}
The diagram fully characterises the steady state.
The lack of the initial state in the above expression implies that steady state is unique. 
In the case of (time) unitality the diagram simplifies from the bottom up, and the unique steady state is the infinite temperature state $\ket{\mcirc \ldots \mcirc}$.

If, in addition, the right unitality condition (iv) holds, the expression simplifies from the right
\begin{align}
\ket{\rho(t)} =
\begin{tikzpicture}[baseline=(current  bounding  box.center), scale=0.6]
\foreach \i in {3}{
\Wgategreen{2}{0+2*\i}\Wgategreen{4}{0+2*\i}
\Wgategreen{1}{1+2*\i}\Wgategreen{3}{1+2*\i}\Wgategreen{5}{1+2*\i}
}
\Wgategreen{3}{5}
 \foreach \i in {-0.5,...,6.5}{
\draw[very thick] (\i, 7.5) -- (\i,7.75);
}
\foreach \j in {4.5,...,7.5}{
  \MYcircle{-1+\j}{\j}
  \MYcircle{7-\j}{\j}
}
\end{tikzpicture}.
\label{eq:NESS3}
\end{align}

The same expression is obtained in (2.) in the following way.
We focus on the reduced density matrix of the sub-system region $A$:
\begin{align}
\rho_A(t)= \Tr_{\Bar{A}} \rho(t),
\end{align}
where the bar denotes the complement of region $A$.
First we use trace preservation (i) to simplify the region outside of $A$.
We obtain half light-cones starting outwards of the region of interest, with $\mcirc$ operators on it:
\bw
\begin{align}
\ket{\rho_A(t)}= d^{|A|}
\begin{tikzpicture}[baseline=(current  bounding  box.center), scale=0.6]
\Wgategreen{-4}{0}\Wgategreen{-2}{0}
\Wgategreen{0}{0}\Wgategreen{2}{0}\Wgategreen{4}{0}\Wgategreen{6}{0}\Wgategreen{8}{0}\Wgategreen{10}{0}
\Wgategreen{-1}{3}\Wgategreen{1}{3}\Wgategreen{3}{3}\Wgategreen{5}{3}\Wgategreen{7}{3}
\Wgategreen{-2}{2}\Wgategreen{0}{2}\Wgategreen{2}{2}\Wgategreen{4}{2}\Wgategreen{6}{2}\Wgategreen{8}{2}
\Wgategreen{-3}{1}\Wgategreen{-1}{1}\Wgategreen{1}{1}\Wgategreen{3}{1}\Wgategreen{5}{1}\Wgategreen{7}{1}\Wgategreen{9}{1}
\draw[very thick] (-6.5,-1) -- (13.5,-1);
\draw[very thick,dotted] (-8.5,-1) -- (-6.5,-1);
\draw[very thick,dotted] (13.5,-1) -- (14.5,-1);
\foreach \i in {-3.5,-1.5,0.5,2.5,4.5,6.5,8.5,10.5,12.5,14.5,16.5}
{
\rhoO{-3.5+\i}{-1}
}
 \foreach \i in {0,...,3}{
\MYcircle{\i-.5+8}{3.5-\i}
\MYcircle{\i-4-.5}{0.5+\i}
}
 \MYcircle{-5.5}{-.5}  \MYcircle{-6.5}{-.5}   \MYcircle{-7.5}{-.5} \MYcircle{11.5}{-.5} \MYcircle{12.5}{-.5}\MYcircle{13.5}{-.5}
\end{tikzpicture}.
\label{eq:NESS}
\end{align}
\ew
Then we simplify the region which is farther than $t$ away from the sub-system, by using  $\lim_{L\to \infty} \Tr E(0)^L= \ket{\triangle}\bra{\msquare}$. In the last step we use left and right unitality (iii), (iv) and conditions~\eqref{eq:cond0State} to simplify the diagrams from the left and the right and obtain~\eqref{eq:NESS3} if $t>|A|/2$.

It is natural to wonder what kind of steady states (c.f.~\eqref{eq:NESS3}) the 3-way evolution prepares in finite time. This question is hard to answer analytically. As an exploratory analysis, we performed numerical simulations on systems of up to $14$ qubits. Since we lack a thorough understanding of the 3-way set of channels, we focused on two classes. The first one is constructed by considering random perturbations around a completely depolarising channel. All of the chosen perturbations are postselected such that the superoperators are completely positive, which in practice means that the perturbations have Frobenius norm $\lesssim 0.7$.
The second class is a weakly perturbed dual-unitary quantum channel, where only the components that break unitality are perturbed. The elements\footnote{Except for the element mapping the identity operator to itself.} of the initial gate $q$ were slightly shrunk to maintain complete positivity.

To characterise the steady states we looked at their purity, mutual information, and negativity. In the above-mentioned examples we observed that the purity was low and decreased with system size, but was nevertheless higher than in the infinite temperature state $\ket{\mcirc \ldots \mcirc}$. For mutual information and negativity, we considered a bipartition by halving the system. Negativity was zero and mutual information was small, suggesting that the steady states are thermal-like \emph{area law} states. Low mutual information can be perhaps expected based on the exact expressions for correlation functions (cf. Eq.~\eqref{eq:quench6}), which generically decay exponentially with the separation of the two points. 

Even though our exploratory investigation suggests that the resulting steady states are typically almost thermal, it does not exclude more exotic ones. It remains an interesting open question if by fine-tuning the local channel or by going to higher local dimension $d$, one can prepare steady states with qualitatively different properties than the ones witnessed.


\section{Classification of Solvable Initial Mixed States}
\label{sec:MPDO}

In this section we generalise the classification of the so-called solvable initial states arising in dual-unitary dynamics~\cite{piroli2020exact}, to encompass space-time unital quantum channels. As discussed earlier, the natural way to extend solvability from MPS to their mixed state analogues, MPDO, is to demand the existence of (nonvanishing) operators $\ket{\msquare}$ and $\ket{\triangle}$ such that Eq.~\eqref{eq:cond0State} holds. Particularly, with graphical calculations in Sections~\ref{sec:corr_fun_quench} and~\ref{sec:noneq_steady_st}, we showed that the solvability condition for the initial states, in combination with the appropriate (space) unitality of the dynamics, allow for the exact calculation of correlation functions and steady states.

We will now provide a classification of those MPDO, under the additional condition that they admit a \textit{local purification}~\cite{verstraete2004matrix,de2013purifications}. That is, we will assume that the MPDO admits a purification that can be written as a translation-invariant MPS (with a constant but possibly different bond dimension) where the purifying system is local.
Our results are captured in two propositions. In Proposition~\ref{prop:solvability} we connect the local purifying tensor with injective MPS, and show that it can be chosen so that it satisfies a condition analogous to space unitality. Using this result, we derive Proposition~\ref{prop:solvable_one_to_one}, in which we show that solvable MPDOs with local purifications are essentially in one to one correspondence with quantum channels. This reduces the classification of solvability for this important class of initial (mixed) states to the well-studied problem of quantum channels.

\subsection{Mixed States with Local Purification and their Classification}

A translation-invariant MPDO over $L$ pairs of sites is in a local purification form if
\begin{align}
    \rho_L (A) = \Tr_{\gamma_1 \ldots \gamma_L} \ket{\Psi_L(A)} \! \bra{\Psi_L(A)} \quad \forall L
\end{align}
where
\begin{align} \label{eq:purifying_mps_def}
  &\ket{\Psi_L(A)} = \nonumber \\ 
  &\!\!\! \sum_{\{i_k^{\mathrm{L}},i_k^{\mathrm{R}},\gamma_k\}} \!\!\!\!\!\!\! \Tr \!\! \left(  A^{(i^{\mathrm L}_1 i^{\mathrm R}_1 \gamma_1)} \ldots A^{(i^{\mathrm L}_L i^{\mathrm R}_L \gamma_L)} \right) \! \ket{i^{\mathrm L}_1 i^{\mathrm R}_1 \gamma_1  \ldots i^{\mathrm L}_L i^{\mathrm R}_L \gamma_L} 
\end{align}
is the purifying MPS\footnote{The purification is defined up to an irrelevant unitary over the purifying system.} and $A^{(i^{\mathrm L}_k i^{\mathrm R}_k \gamma_k)}$ are $\chi \times \chi$ matrices. Above $i_k^{\mathrm L}, i_k^{\mathrm R}$ denote respectively the left and right indices of the physical space for each pair of sites (i.e., integer and half-integer), while $\gamma_k$ corresponds to the local purification space. Using a representation as in Eq.~\eqref{eq:quench1}, the tensor of a MPDO with a local purification can be taken to be
\begin{align} \label{eq:local_purification}
M_A^{(i^{\mathrm L} i^{\mathrm R} j^{\mathrm L} j^{\mathrm R})} = \sum_\gamma A^{(i^{\mathrm L} i^{\mathrm R}\gamma)} \otimes [A^{(j^{\mathrm L} j^{\mathrm R} \gamma)}]^* \,,
\end{align}
where we have changed notation $T \mapsto M$ to emphasise that the local purification tensor has additional structure. In graphical notation, this translates to
\be
\frac{1}{d^2} \;
\begin{tikzpicture}[baseline=(current  bounding  box.center), scale=.7]
\draw[very thick] (-1.0 ,0.) -- (3.0,0.);
\rhoO{0}{0}\rhoO{2}{0}
\end{tikzpicture}
= 
\begin{tikzpicture}[baseline=(current  bounding  box.center), scale=.7]
\def\dx{0.15}
\def\dy{0.15}
\draw[ thick] (-4.75+.2+\dx,0.+\dy) --(-0.0+0.2+\dx,0.+\dy);
\draw[ thick] (-3.75+\dx,0.75+\dy) -- (-3.75+\dx,-0.0+\dy);
\draw[ thick] (-3.25+\dx,0.75+\dy) -- (-3.25+\dx,0.0+\dy);
\draw[ thick, fill=myblue, rounded corners=2pt] (-4+\dx,0.25+\dy) rectangle (-2.5+\dx,-0.25+\dy);
\draw[ thick] (-1.75+\dx,0.75+\dy) -- (-1.75+\dx,-0.0+\dy);
\draw[ thick] (-1.25+\dx,0.75+\dy) -- (-1.25+\dx,0.0+\dy);
\draw[ thick, fill=myblue, rounded corners=2pt] (-2+\dx,0.25+\dy) rectangle (-0.5+\dx,-0.25+\dy);
\draw[ thick] (-2.75 ,0.25) -- (-2.75,0.25+.2) --(-2.75+\dx,0.25+\dy+.2) --(-2.75+\dx,0.25+\dy);
\draw[ thick] (-4.75 ,0.) -- (-.0+.2,0.);

\draw[ thick] (-3.75,0.75) -- (-3.75,-0.0);
\draw[ thick] (-3.25,-0.0) -- (-3.25,0.75);
\draw[ thick, fill=myorange0, rounded corners=2pt] (-4,0.25) rectangle (-2.5,-0.25);

\draw[ thick] (-1.75,0.75) -- (-1.75,-0.0);
\draw[ thick] (-1.25,-0.0) -- (-1.25,0.75);
\draw[ thick, fill=myorange0, rounded corners=2pt] (-2,0.25) rectangle (-0.5,-0.25);
\draw[ thick] (-0.75 ,0.25) -- (-0.75,0.25+.2) --(-0.75+\dx,0.25+\dy+.2) --(-0.75+\dx,0.25+\dy);
\end{tikzpicture} \; ,
\ee
where yellow and blue rectangles denote $A$ and $A^*$, and are connected by a line representing the summation over $\gamma$. We will also be assuming that the purifying MPS is normalized in the thermodynamic limit
\begin{align}
\lim_{L \to \infty}  \Tr \left( \ket{\Psi_L(A)} \! \bra{\Psi_L(A)} \right) = 1 \,,
\end{align}
a property that is passed on to the resulting MPDO.

The advantage of the local purification form of a MPDO is that it always yields \textit{by construction} a positive-semidefinite operator. In general, however, not every translation-invariant MPDO admits a local purification valid for all sizes~\cite{de2016fundamental}. On the other hand, resolving whether a (general) tensor yields a positive-semidefinite operator for all sizes is known to be an undecidable problem~\cite{de2016fundamental}.

Following the classification of solvable initial states in Ref.~\cite{piroli2020exact}, it is natural to identify any two MPDO as \textit{equivalent in the thermodynamic limit} if they yield the same expectation values over any observable with support over a finite region. More precisely, two MPDO $\rho_L$ and $\sigma_L$, defined over $L$ pairs of sites, are equivalent if
\begin{align}
    \lim_{L \to \infty} \Tr_{\bar R} \rho_L  =  \lim_{L \to \infty} \Tr_{\bar R} \sigma_L
    \label{eq:equivalence}
\end{align}
where $R$ is any finite region and the bar denotes its complement. In that case, we will also say that the tensors producing the two MPDO are equivalent.

We are now ready to state the first result towards the classification of solvable MPDO. For that, we will need the notion of injectivity, which is central in the theory of MPS~\cite{fannes1992finitely,cirac2021matrix}. A MPS tensor is said to be injective if, when seen as a map from the auxiliary to the physical space, it is injective~\cite{cirac2021matrix}.

\begin{Proposition} \label{prop:solvability}
Suppose $M_{A'}$ defines a MPDO in the local purification form~\eqref{eq:local_purification} that fulfils the solvability condition~\eqref{eq:cond0State}. Then $M_{A'}$ is equivalent to a solvable tensor $M_{A}$, also in local purification form, where the purifying MPS is injective (for sizes large enough), and satisfies
\label{subeq:solvable_mpdo}
\begin{align} \label{eq:kraus_solvable_MPDO}
\sum_{k^{\mathrm R}, \gamma} A^{(i^{\mathrm L} k^{\mathrm R} \gamma)}  A^{(j^{\mathrm L} k^{\mathrm R}  \gamma)\dagger} = \frac{\delta_{i^{\mathrm L},j^{\mathrm L}}}{d} \1_{\chi}
 \end{align}
or, in the folded graphical notation,
\be \label{eq:MPDO_classification}
\sqrt{\frac{d}{\chi}} \;\;
\begin{tikzpicture}[baseline=(current  bounding  box.center), scale=.7]
\def\dx{0.15}
\def\dy{0.15}
\draw[ thick] (-4.4+\dx,0.+\dy) --(-2.1+\dx,0.+\dy);
\draw[ thick] (-3.75+\dx,0.75+\dy) -- (-3.75+\dx,-0.0+\dy);
\draw[ thick] (-3.25+\dx,0.75+\dy) -- (-3.25+\dx,0.0+\dy);
\draw[ thick, fill=myblue, rounded corners=2pt] (-4+\dx,0.25+\dy) rectangle (-2.5+\dx,-0.25+\dy);
\draw[ thick] (-2.75 ,0.25) -- (-2.75,0.25+.2) --(-2.75+\dx,0.25+\dy+.2) --(-2.75+\dx,0.25+\dy);
\draw[ thick] (-4.4 ,0.) -- (-2.1,0.);

\draw[ thick] (-3.75,0.75) -- (-3.75,-0.0);
\draw[ thick] (-3.25,-0.0) -- (-3.25,0.75);
\draw[ thick, fill=myorange0, rounded corners=2pt] (-4,0.25) rectangle (-2.5,-0.25);
\draw[ thick] (-2.1+\dx,0.+\dy) -- (-2.1,0.);
\draw[ thick] (-3.25+\dx,0.75+\dy) -- (-3.25,.75);
\Text[x=-2.55,y=0.8]{$\tiny \gamma$}
\end{tikzpicture}
=
\begin{tikzpicture}[baseline=(current  bounding  box.center), scale=.75]
\rhoO{0}{0}
\MYcircle{0.5}{-.0}\MYcircle{.5}{.5}
\end{tikzpicture}= \begin{tikzpicture}[baseline=(current  bounding  box.center), scale=.75]
\draw[very thick] (0.5,0.5)--(.3,0.7);
\draw[very thick] (0.5,0)--(.3,0);
\MYcircle{0.5}{-.0}\MYcircle{.5}{.5}
\end{tikzpicture}
\;,
\ee
\end{Proposition}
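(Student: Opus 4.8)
The plan is to push everything through the canonical theory of matrix-product states applied to the \emph{purifying} MPS, reading off~\eqref{eq:kraus_solvable_MPDO} from the solvability condition~\eqref{eq:cond0State} after fixing a convenient bond gauge. First I would use the fundamental theorem of MPS to replace the given purifying tensor $A'$ by an equivalent tensor $A$ whose purifying MPS $\ket{\Psi_L(A)}$ is injective for large enough $L$ (i.e.\ normal). Because two purifying MPS representing the same state in the thermodynamic limit yield, after tracing out $\gamma$, identical reduced density matrices on every finite region, passing to the injective representative preserves the MPDO equivalence class~\eqref{eq:equivalence}; the local-purification form~\eqref{eq:local_purification} is manifestly retained, so positivity stays automatic. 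The statement claims injectivity only ``for sizes large enough'' precisely because achieving it may require blocking.

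Next I would identify the space transfer matrix. Tracing both physical legs of $M_A=\sum_\gamma A\otimes A^*$ against $\ket{\mcirc}$ shows that $E(0)$ equals, up to the normalisation fixed by~\eqref{eq:mpdo_normalisation}, the ordinary transfer operator $\mathbb{E}_A=\sum_{i^{\mathrm L}i^{\mathrm R}\gamma}A^{(i^{\mathrm L}i^{\mathrm R}\gamma)}\otimes[A^{(i^{\mathrm L}i^{\mathrm R}\gamma)}]^*$ of the purifying MPS. Injectivity makes $\mathbb{E}_A$ primitive, so by the noncommutative Perron--Frobenius theorem its leading eigenvalue is nondegenerate and the associated fixed point is positive and full rank. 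This is what guarantees that the fixed point $\ket{\triangle}\!\bra{\msquare}$ of~\eqref{eq:solvable_mpdo_fixed_point} unfolds over the $\chi$-dimensional auxiliary space, with $\ket{\triangle}$ the vectorisation of a positive full-rank matrix $R$ (and $\bra{\msquare}$ of a positive $L$).

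With this in hand I would translate the right-hand equation of~\eqref{eq:cond0State}. Contracting the right bond of $M_A$ with $\ket{\triangle}\leftrightarrow R$ and the upper-right physical leg with $\ket{\mcirc}$ factorises through the purification and gives, in terms of $A$,
\begin{align}\label{eq:plan_half_transfer}
\sum_{i^{\mathrm R},\gamma}A^{(i^{\mathrm L}i^{\mathrm R}\gamma)}\,R\,[A^{(j^{\mathrm L}i^{\mathrm R}\gamma)}]^\dagger=\delta_{i^{\mathrm L}j^{\mathrm L}}\,R\,,
\end{align}
i.e.\ the half-traced transfer operation (trace the right site, keep the left site open, act on $R$) is diagonal in the open index and reproduces $R$. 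I would then exhaust the residual bond gauge freedom $A\mapsto R^{-1/2}AR^{1/2}$, legitimate because $R>0$ is invertible, leaves the purifying MPS (hence the MPDO) unchanged, and preserves both injectivity and the local-purification form. Setting $R=\1_\chi$ turns~\eqref{eq:plan_half_transfer} into precisely~\eqref{eq:kraus_solvable_MPDO}, the overall factor $1/d$ being the normalisation convention.

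The hard part will be the interface between the two auxiliary spaces rather than any single manipulation: condition~\eqref{eq:cond0State} lives on the folded $\chi^2$-dimensional bond space and is phrased for the composite tensor $M$, whereas~\eqref{eq:kraus_solvable_MPDO} is about the $\chi$-dimensional purifying tensor $A$. Making the factorisation rigorous hinges on $\ket{\triangle},\bra{\msquare}$ being the \emph{unique} fixed points and unfolding compatibly with the $A\otimes A^*$ splitting, which is exactly where injectivity (and the primitivity of $\mathbb{E}_A$ it entails) is indispensable: without it the fixed point of $E(0)$ need not correspond to a positive bond operator and the gauge to $R=\1$ would be unavailable. A secondary point needing care is verifying that the equivalence used in the injectivity reduction genuinely transports solvability, which holds because~\eqref{eq:cond0State} constrains only the leading fixed-point sector that equivalence preserves.
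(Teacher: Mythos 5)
Your proposal is correct and follows essentially the same route as the paper: reduce the purifying tensor to an injective one (preserving thermodynamic-limit equivalence), use the completely positive structure of the purifying transfer map together with the normalisation to obtain a unique positive, full-rank fixed point $Z$, and then apply the gauge transformation $A\mapsto Z^{-1/2}AZ^{1/2}$ so that the solvability condition becomes Eq.~\eqref{eq:kraus_solvable_MPDO}. The only cosmetic difference is that you write out the intermediate half-transfer identity with $R$ inserted explicitly, whereas the paper passes directly from $E_{M_A}\ket{\mcirc}=\ket{\mcirc}$ to the Kraus condition via uniqueness of the fixed point.
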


\begin{proof}
Our proof generalises the ideas employed in Ref.~\cite{piroli2020exact} (see Appendix~B therein). However, the proof there relies on certain technical results of completely positive maps, and the latter arise naturally for transfer matrices of \textit{pure} initial states. The key step here is to invoke the purification form, which restores some of this structure but at the cost of the additional purifying system.

Indeed, from the purification~\eqref{eq:local_purification} of $M_{A'}$, it follows that the corresponding space transfer matrix $E_{M_{A'}} \coloneqq E_{M_{A'}}(0)$
can be decomposed as 
\begin{align}
 E_{M_{A'}} =  \sum_{i^{\mathrm L} i^{\mathrm R} \gamma} A^{(i^{\mathrm L} i^{\mathrm R} \gamma)} \otimes [A^{(i^{\mathrm L} i^{\mathrm R} \gamma)}]^* \,.   
\end{align}
\noindent As such, it corresponds to (the vectorised form of) the completely positive map 
\begin{align}
    \mathcal E_{M_{A'}} (\cdot) =  \sum_{i^{\mathrm L} i^{\mathrm R} \gamma} A^{(i^{\mathrm L} i^{\mathrm R} \gamma)} (\cdot) [A^{(i^{\mathrm L} i^{\mathrm R} \gamma)}]^\dagger\,.
\end{align}
Due to the normalisation~\eqref{eq:mpdo_normalisation} of the MPDO, $E_{M_{A'}}$ has a unique largest eigenvalue $\lambda = 1$ (and no other eigenvalue of the same modulus). This fact, together with $\mathcal E_M$ being completely positive, imply that the eigenoperator $Z$ satisfying $\mathcal E (Z) = Z$ (i.e., the unique fixed point) is positive-semidefinite~\cite{perez2007matrix}. In addition, the purifying MPS $\ket{\Psi_L(A)}$ is injective (for large enough number of sites $L$) if and only if $Z$ is positive-definite~\cite{sanz2010quantum,piroli2020exact}. Notice also that from the solvability condition~\eqref{eq:cond0State} we can identify $\ket{\triangle} = \ket{Z}$.

However, up to the equivalence in the thermodynamic limit defined earlier 
(cf. Eq.~\eqref{eq:equivalence}), one can replace $A'$ with a tensor $A''$ such that the new purifying MPS $\ket{\Psi_L(A'')}$ is injective (for $L$ large enough). It is of smaller or equal bond dimension, and the corresponding fixed point $Z$ has full rank. This result follows directly from the steps of the proof in~\cite{piroli2020exact}.

Now, since $Z$ can be taken to be positive-definite, we are able to define $A^{(i^{\mathrm L} i^{\mathrm R} \gamma)} =  Z^{-1/2} A''^{(i^{\mathrm L} i^{\mathrm R} \gamma)} Z^{1/2}$, which amounts to a gauge transformation in the auxiliary space of the tensor $A''$. Crucially, this transformation enforces that  $E_{M_{A}} \ket{\mcirc} = \ket{\mcirc}$. Due to the uniqueness of the fixed point of the transfer matrix, this also leads to Eq.~\eqref{eq:kraus_solvable_MPDO}, while at the same time $M_{A}$ is clearly in a local purification form.
\end{proof}

The above reduces to the solvable MPS of Ref.~\cite{piroli2020exact} when the purifying space trivializes. However, the existence of a nontrivial purifying space qualitatively changes the classification.

To show that, we need to reinterpret Eq.~\eqref{eq:kraus_solvable_MPDO} by regrouping the indices of the tensor $A$. Treating jointly the left bond and the left physical index as input, and the corresponding right indices as output, and assuming as usual a fixed reference basis, defines the operators
\begin{subequations} \label{subeq:mpdo_channel}
\begin{align} \label{eq:Kraus_of_isometry}
K^{(\gamma) \dagger} \coloneqq \sum_{i^{\mathrm L} j^{\mathrm R}} A^{(i^{\mathrm L} j^{\mathrm R} \gamma)} \otimes \ket{i ^{\mathrm L}} \! \bra{j^{\mathrm R}}  \,.
\end{align}
\noindent In this notation, the meaning of Eq.~\eqref{eq:kraus_solvable_MPDO} becomes more transparent, since it enforces that
\begin{align} \label{eq:kraus_normalization}
    \sum_{\gamma} K^{(\gamma) \dagger} K^{(\gamma) } = \frac{\1_{d \chi}}{d} \,,
\end{align}
\end{subequations}
\noindent i.e., that $K^{(\gamma)}$ form a valid Kraus operator set of a CPTP map, up to an irrelevant normalization factor. Said differently, it is a valid \emph{quantum channel} in the combined auxiliary and real space.
Here $\chi$ denotes the bond dimension of $A$, which places a bound on the Kraus rank of the quantum channel\footnote{Proposition~\ref{prop:solvability} may only decrease the bond dimension of the original MPS $A'$.}.

We have thus observed that Eq.~\eqref{eq:Kraus_of_isometry} establishes a mapping from tensors $A$ satisfying the conditions of Proposition~\ref{prop:solvability} to quantum channels. As with any purification, a transformation
\begin{align}
    A^{(i^{\mathrm L} i^{\mathrm R} \gamma)} \mapsto \sum_{\gamma'} V_{\gamma \gamma'} A^{(i^{\mathrm L} i^{\mathrm R} \gamma')}
\end{align}
for $V$ isometry leaves the MPDO and the quantum channel invariant, and only affects the form of the Kraus decomposition. Moreover, notice that quantum channels resulting from Proposition~\ref{prop:solvability} have additional structure since the purifying MPS always can be taken to be injective. It is convenient to rephrase this injectivity condition in terms of the Kraus decomposition; then the map
\begin{align} \label{eq:channel_injectivity}
    X \mapsto \sum_{\gamma} \Tr_\chi \left[  ( X \otimes \1_d ) K^{(\gamma)} \right] \ket{\gamma} \quad \text{is injective,}
\end{align}
where $X$ acts in the auxiliary space and the partial trace is also with respect to the auxiliary space. It is not hard to see that this is a property of the quantum channel and does not depend on the particular Kraus representation\footnote{This is since every different Kraus representation of a quantum channel is obtained by the action of an isometry over the space of the index (i.e., here the purification space), and such action cannot modify injectivity.}.

With this observation, we can establish that not only do solvable tensors yield \emph{quantum channels}, but also the opposite holds.

\begin{Proposition} \label{prop:solvable_one_to_one}
Up to equivalence in the thermodynamic limit, to every tensor in purification form yielding a solvable MPDO corresponds a quantum channel satisfying the injectivity property~\eqref{eq:channel_injectivity}. Conversely, to every quantum channel satisfying~\eqref{eq:channel_injectivity} corresponds a solvable MPDO in purification form.
\end{Proposition}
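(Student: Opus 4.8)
The plan is to prove the two implications separately; the forward one is largely a restatement of Proposition~\ref{prop:solvability} together with the index regrouping already introduced, while the converse requires an explicit reconstruction of the MPDO from the channel.

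For the forward direction I would take a tensor in purification form yielding a solvable MPDO and apply Proposition~\ref{prop:solvability} to pass, up to equivalence in the thermodynamic limit, to an equivalent tensor $A$ with injective purifying MPS obeying Eq.~\eqref{eq:kraus_solvable_MPDO}. Regrouping its indices as in Eq.~\eqref{eq:Kraus_of_isometry} produces operators $K^{(\gamma)}$, and Eq.~\eqref{eq:kraus_solvable_MPDO} becomes exactly the normalization~\eqref{eq:kraus_normalization}, so that $\{K^{(\gamma)}\}$ is the Kraus set of a (rescaled) CPTP map. The only remaining point is that injectivity of the purifying MPS matches the channel injectivity~\eqref{eq:channel_injectivity}: expanding the map in~\eqref{eq:channel_injectivity} in terms of the $A^{(i^{\mathrm L} i^{\mathrm R}\gamma)}$ I would recognize the canonical map from auxiliary operators to the physical space whose injectivity is the definition of an injective MPS, and invoke the representation-independence noted after Eq.~\eqref{eq:channel_injectivity} (different Kraus sets differ by an isometry on $\gamma$, which cannot affect injectivity).

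For the converse I would start from $\{K^{(\gamma)}\}$ satisfying~\eqref{eq:kraus_normalization} and~\eqref{eq:channel_injectivity}, and recover the matrices $A^{(i^{\mathrm L} i^{\mathrm R}\gamma)}$ as the physical-basis matrix elements of $K^{(\gamma)\dagger}$, i.e.\ by inverting Eq.~\eqref{eq:Kraus_of_isometry}. Assembling them through the local purification~\eqref{eq:local_purification} yields an MPDO that is positive semidefinite for every size by construction. Read backwards, the CPTP normalization~\eqref{eq:kraus_normalization} is precisely Eq.~\eqref{eq:kraus_solvable_MPDO}, i.e.\ the second (right, $\triangle$) equation of~\eqref{eq:cond0State} with $\ket{\triangle}=\ket{\mcirc}$; equivalently the space transfer matrix $E_{M_A}$ fixes $\ket{\mcirc}$. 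To guarantee that this tensor genuinely defines a normalizable state I would feed in the injectivity~\eqref{eq:channel_injectivity}: through the fundamental theorem of MPS it makes $E_{M_A}$ primitive, hence endowed with a nondegenerate largest eigenvalue; combined with the unital fixed point $\ket{\mcirc}$ this forces that eigenvalue to be one, establishes the normalization~\eqref{eq:mpdo_normalisation}, and produces a unique fixed point $\ket{\triangle}\bra{\msquare}$ with $\ket{\triangle}=\ket{\mcirc}$ and $\bra{\msquare}$ the associated (full-rank) left fixed point.

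The step I expect to be the crux is the first (left, $\msquare$) equation of~\eqref{eq:cond0State}. Unlike the right one, it is not a consequence of trace preservation alone: at the single-tensor level it requires that contracting the left bond with $\bra{\msquare}$ and the top-left leg with $\bra{\mcirc}$ factorizes, which amounts to the channel possessing an invariant operator of the product form $\msquare\otimes\1_d$ --- strictly stronger than the transfer-matrix relation $\bra{\msquare}E_{M_A}=\bra{\msquare}$ recovered after additionally tracing the free leg. I would therefore isolate this as the central lemma and attack it through the purification and injectivity structure, taking $\bra{\msquare}$ to be the primitive left fixed point of $E_{M_A}$ and showing that injectivity promotes its invariance from the traced (transfer-matrix) level to the resolved single-tensor level, rather than expecting it to follow formally from~\eqref{eq:kraus_normalization}. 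Finally I would note that both assignments respect the thermodynamic-limit equivalence~\eqref{eq:equivalence} and that the maps of the two directions are mutually inverse modulo the isometric gauge freedom on $\gamma$ (which leaves both the MPDO and the channel unchanged), yielding the stated one-to-one correspondence.
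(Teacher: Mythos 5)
Your forward direction and the bulk of your converse coincide with the paper's proof: Proposition~\ref{prop:solvability} plus the index regrouping~\eqref{eq:Kraus_of_isometry} for the forward implication, and for the converse the reconstruction of the tensor from a Kraus set via~\eqref{eq:Kraus_to_MPDO}, positivity by construction, the identification of the normalization~\eqref{eq:kraus_normalization} with~\eqref{eq:kraus_solvable_MPDO}, and the use of injectivity to obtain the nondegenerate unit eigenvalue of the transfer matrix and hence the normalization in the thermodynamic limit. Up to that point you are reproducing the paper's argument.

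The gap is the step you yourself flag as the crux: the left ($\msquare$) equation of~\eqref{eq:cond0State} is announced as a ``central lemma'' but never proved, and the strategy you propose for it cannot succeed. As you correctly observe, at the single-tensor level that equation is equivalent to the reconstructed channel $\mathcal K(\cdot)=d\sum_\gamma K^{(\gamma)}(\cdot)K^{(\gamma)\dagger}$ admitting an invariant operator of the product form $\msquare\otimes\1_d$; injectivity (primitivity) only guarantees a unique full-rank fixed point of the \emph{traced} transfer matrix, and nothing promotes this to the resolved statement. Indeed the promotion fails in general: already for $\chi=1$ the left condition reads $\sum_\gamma K^{(\gamma)}K^{(\gamma)\dagger}\propto\1_d$, i.e.\ unitality of the channel, which is violated e.g.\ by the amplitude-damping channel even though~\eqref{eq:kraus_normalization} and~\eqref{eq:channel_injectivity} hold; the resulting product MPDO then has only one maximally mixed single-site marginal and does not satisfy the first equation of~\eqref{eq:cond0State}. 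To be fair, the paper's own proof is silent on exactly this point --- it verifies only~\eqref{eq:kraus_solvable_MPDO} and then declares the MPDO solvable --- so you have correctly located a genuine weakness of the statement rather than merely of your own argument; but the remedy is not the lemma you propose. Either ``solvable'' in the converse must be read as the right-hand ($\triangle$) condition alone (which is all that the right light-cone simplifications require), or the class of channels must be restricted to those possessing a product fixed point $\msquare\otimes\1_d$ (for $\chi=1$, unital channels, consistent with the depolarising example given in the text).
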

\begin{proof}

We have already shown the direct implication, which follows from Proposition~\ref{prop:solvability}. The quantum channel can be constructed via Eqs.~\eqref{subeq:mpdo_channel}.

For the converse, the first step is decomposing the quantum channel into a valid (minimum) set of Kraus operators. Then from Eq.~\eqref{eq:Kraus_of_isometry}, one can deduce the family of operators $A'^{(i^{\mathrm L} j^{\mathrm R} \gamma)}$ (see Eq.~\eqref{eq:Kraus_to_MPDO}). In turn, the above family of operators yields a MPS $\ket{\Psi_L(A)}$ via Eq.~\eqref{eq:purifying_mps_def}. Notice that different Kraus decompositions will result in the action of a unitary over the purification space, thus producing different purification of the same MPDO. Importantly, the state is guaranteed to be normalized in the thermodynamic limit. This is since
\begin{align}
    \lim_{L \to \infty}  \braket{ \Psi_L(A) | \Psi_L(A)} = \lim_{L \to \infty} \Tr \left( E_{M_{A'}} ^L \right) = 1
\end{align}
where the last equality follows from the injectivity condition; the latter ensures that the largest eigenvalue (in magnitude) of the transfer matrix is unique and equal to the spectral radius of the quantum channel. As a result, we have obtained a valid MPDO in purification form, which satisfies Eq.~\eqref{eq:kraus_solvable_MPDO}.
\end{proof}

In summary, the utility of the above result is that quantum channels (satisfying a generic injectivity property) are mapped to solvable MPDO; inverting Eq.~\eqref{eq:Kraus_of_isometry} one gets
\begin{align} \label{eq:Kraus_to_MPDO}
    A^{(i^{\mathrm L} j^{\mathrm R} \gamma)} = \braket{i^{\mathrm L} | K^{(\gamma) \dagger} | j^{\mathrm R}} \,.
\end{align}
Up to equivalence, this is exhaustive for MPDO with local purification.

\subsection{Example: Solvable MPDO with $\chi = 1$}

Let us consider the simplest case of bond dimension $\chi = 1$. This instance corresponds to solvable MPDO that completely factorize over two sites $\rho =  \rho_0 ^{\otimes L}$ and are characterised by quantum channels acting solely over $\mathbb C ^{d}$. In particular, 
\begin{align}
    \rho_0 = \!\!\!\!\!\sum_{ i^{\mathrm L} j^{\mathrm R} k^{\mathrm L} l^{\mathrm R} \gamma } 
    \!\!\!\!\!\braket{i^{\mathrm L} | K^{(\gamma) \dagger} | j^{\mathrm R}} \braket{l^{\mathrm R} | K^{(\gamma) } | k^{\mathrm L}} \ket{i^{\mathrm L}  j^{\mathrm R}} \! \bra{k^{\mathrm L}  l^{\mathrm R}} .
\end{align}
Notice that injectivity condition is trivially satisfied. In addition, normalization~\eqref{eq:kraus_normalization} ensures that $\Tr (\rho_0) = 1$, hence the global MPDO is exactly normalized for all sizes.

For instance consider the family of depolarising channels 
\be
\Phi(\sigma) = \alpha \sigma + (1-\alpha) \1
\ee
which connects the totally mixed initial state $\ket{\mcirc \dots \mcirc}$ at $\alpha=0$ to pure initial state of Bell pairs for $\alpha=1$. The first was effectively the initial state in the first calculation of correlation functions on the infinite temperature initial states \cite{bertini2019exact}, whereas the second is the only (up to local unitaries) pure solvable initial state for $\chi=1$~\cite{piroli2020exact}.

\section{Conclusions and perspectives}
\label{sec:conclusions}

In this work, we generalise the ideas of dual-unitarity to noisy quantum circuits. The key is to impose that the evolution along one or both of the space directions, and possibly backward in time, is a valid quantum channel. This amounts to different unitality conditions, which give rise to various solvable classes of dynamics.
Via dimension counting arguments we show that the resulting families of models are much larger than averaging of dual-unitary gates. Regarding their structure, we show that all 4-way unital gates (i.e., unital along the space and time directions) can be expressed as affine combinations of a restricted subclass of dual-unitary ones. This is of particular relevance, since a characterisation of dual-unitary gates beyond qubits is lacking. On the other hand, the problem seems to be considerably more tractable when the class of dynamics is enlarged, as considered here. When these affine combinations can be expressed as convex ones is an interesting question left for the future.
Moreover, we show how the additional sideways unitality conditions enable simplifications of the objects appearing in the calculation of physical properties, for instance, correlation functions. As a simple but physically relevant example, we show how the average evolution of noisy circuits can lead to such simplifications and exact solvability, even if dual-unitarity of the gates is absent. We also address the question of a solvable quantum quench, that is, the evolution of a mixed state that allows for exact calculations. We provide necessary condition for the initial state, in the form of a Matrix-Product Density Operator (MPDO), to be solvable. Focusing MPDO with a local purification, we show that they are essentially in one-to-one correspondence with quantum channels.

The proposed classes of evolutions lead to exact solutions, which provide insights into the challenging dynamics of noisy quantum circuits. Quantum circuits of quantum channels model for instance quantum computation, which is not totally isolated from the environment. As such, we expect the present ideas to also be relevant in that context.

When using the simplifications for exact calculations, we focused on the homogeneous dynamics. Nevertheless, the simplifying properties can also be used in evolutions, which are disordered in space or time, as long as the additional properties hold for all of the local channels. A detailed study of these effects is left for future work. 

One can also extend the discussion to $k$-local operators. After simplifications from 3-way unitality, their correlations would be expressed using $k$ qudit quantum channels. Further characterisation of these channels (also for $k=1$) is an interesting direction left for the future.

One would wish to study also objects beyond correlators, such as purity, entanglement, negativity, and so on. With dual-unitarity this proved possible, but here the analogous calculations are not straightforward. The main obstruction is that while folded dual-unitary gates $q=U \otimes U^*$ are unitary, 4-way unital channels are not. In particular, $q q^\dagger$ and $q^\dagger q$ do not simplify. We expect one needs to satisfy additional conditions, in order to circumvent this difficulty.

\section*{Acknowledgements}
We thank Toma\v z Prosen, Bruno Bertini, and  Ignacio Cirac for fruitful discussions and Lorenzo Piroli, Felix Fritzsch, Toma\v z Prosen, Bruno Bertini for comments on the manuscript. Both authors are supported by the Alexander von Humboldt Foundation. We acknowledge support by the DFG (German Research Foundation) under Germany's Excellence Strategy -- EXC-2111 -- 390814868. 
\vspace{.2cm}

\bibliographystyle{quantum}
\bibliography{bibliography}
\appendix

\section{Proofs of Propositions} \label{sec:app_proofs}

Here we report the proofs omitted from the main text.

\subsection{Proof of Proposition~\ref{prop:counting_dimensions}}

Let $\{ F_k \}_{k=0}^{d^2-1}$ be a (Hilbert-Schmidt) orthonormal basis of hermitian operators acting over $\mathbb C^{d}$. Without loss of generality we take $F_0 \equiv \1_d / \sqrt{d}$, hence the rest of the operators are traceless. In turn, this induces a basis in the space of hermiticity-preserving super-operators over $\mathbb C^{d} \otimes \mathbb C^{d}$, defined through its action
\begin{align}
    \mathcal F_{kl,ij} (X_A \otimes X_B) = F_k \Tr(X_A F_i) \otimes F_l \Tr(X_B F_j) \,,
\end{align}
where in this notation $\mathcal T = \mathcal F_{00,00}$ and $X_A,X_B$ are the operator acting on $A,B$. Then, quantum channels can be expanded in the form
\begin{align} \label{eq:expansion_Pauli}
    \mathcal E = \mathcal T + \sum_{ijkl} \alpha_{kl,ij} \mathcal F_{kl,ij}
\end{align}
where the $\alpha$ coefficients are real.

Let us start with the case of perfect channels. The key observation is that, in the above expansion, a completely positive map is a perfect channel if and only if (for the nonzero coefficients) at most one of the $(ijkl)$ indices is zero, i.e., corresponds to the identity operator. In other words, any nonvanishing component with at least two nonzero indices $(ijkl)$ will necessarily imply the failure of at least one of the conditions (i)-(vi). This is readily verified graphically using the folded picture notation. 

For instance, assume condition (iii). Then, for at least one of $j,l$ nonzero, 
\be
\alpha_{0l,0j} =
\begin{tikzpicture}[baseline=(current  bounding  box.center), scale=.7]
\Wgategreen{0}{0}
\MYcircle{-.5}{-.5}
\MYcircle{-.5}{0.5}
\MYcircleB{.5}{-.5}
\MYcircleB{.5}{0.5}
\end{tikzpicture}=
\begin{tikzpicture}[baseline=(current  bounding  box.center), scale=.7]
\draw[very thick](-.5,-.5)--(0.5,-.5);
\draw[very thick](-.5,.5)--(0.5,.5);
\MYcircle{-.5}{-.5}
\MYcircle{-.5}{0.5}
\MYcircleB{.5}{-.5}
\MYcircleB{.5}{0.5}
\end{tikzpicture}
= 0 ,
\ee
where the black circles correspond operators $ F_j$ and $ F_l$.

As a result, the allowed nonzero $\alpha$ coefficients are only of the following types:
\begin{itemize}
    \item  None of the $(ijkl)$ indices is zero [$(d^2-1)^4$ operators].
    \item Exactly one of the $(ijkl)$ indices is zero [$4 \cdot (d^2-1)^3$ operators].
\end{itemize}

A priori, it is not obvious that all the preceding linearly independent deformations of $\mathcal T$ can produce completely positive maps for a suitable range of the parameters. However, this is indeed the case for small enough (but nonzero) values for all of the specified $\alpha$ parameters. This follows by considering the corresponding CJS $J(\mathcal E)$. Notice that 
$J(\mathcal F_{kl,ij}) = F_k \otimes F_l \otimes F_i \otimes F_j$ (see Eq.~\eqref{eq:choi_graphical}) and thus $J(\mathcal T) = \1/d^2$. As a result, small enough (hermitian) perturbations around $J(\mathcal T)$ preserve positivity.

In conclusion, by considering perturbations around $\mathcal T$, we have shown that there exists an affine subspace having real dimension
\begin{align*}
    (d^2-1)^4 + 4 \cdot (d^2-1)^3 = (d^4-1)^2- 4(d^2-1)^2 \,,
\end{align*}
which contains the convex set of perfect quantum channels. This affine subspace is clearly a minimal one, as it follows from the completeness of the expansion~\eqref{eq:expansion_Pauli} and the previous considerations of the allowed indices.

The proof for the remaining cases proceeds in the same fashion, with a more refined counting.
In the case of 4-way unital channels, one needs to include additionally operators of the form $\mathcal F_{kl,ij}$ such that
\begin{itemize}
    \item Either $i=l=0$ or $j = k =0$ and the rest of the indices are nonzero [$2 \cdot (d^2-1)^2$ operators].
\end{itemize}
For 3-way (say, TU and LU), also include:
\begin{itemize}
    \item $j = l =0$ and the rest of the indices are nonzero [$(d^2-1)^2$ operators].
\end{itemize}
The counting is identical for the rest of the 3-way cases. Finally, for 2-way (say, LU) one needs to add to the above:
\begin{itemize}
    \item $i = j =0$ and the rest of the indices are nonzero [$(d^2-1)^2$ operators].
    \item $i = j = l = 0$  and the remaining index is nonzero [$(d^2-1)$ operators].
\end{itemize}
For the TU case, instead add to the 3-way:
\begin{itemize}
    \item $i = k =0$ and the rest of the indices are nonzero [$(d^2-1)^2$ operators].
\end{itemize}

Performing the sums, one arrives at the values reported in Table~\ref{tab:dimensions}.

\subsection{Proof of Proposition~\ref{prop:qubits_affine_dual}}
\label{app:proofProp2}

Let us first restate some facts from the main text for convenience. In~\cite{claeys2021ergodic} it was shown that
\be \label{eq:dual_unitary_param}
   \!\! U_k (J_k) = (W^k _1 \otimes W^k _2)  \exp \left(i J_k s_3 \otimes s_3\right) S (V^k _1 \otimes V^k _2),
\ee
are dual-unitary gates for any local dimension $d$, where $J_k\in \mathbb R$, the local unitaries are arbitrary, $S$ denotes the swap operator, and $k$ is an index that we will use shortly to distinguish between different dual-unitaries. The spin matrices $s_i$ with $i  = 1,2,3$ satisfy the $su(2)$ commutation relations
\begin{align} \label{eq:su2_algebra}
    [ s_i , s_j ] = i \sum_k \epsilon_{ijk} s_k \,.
\end{align}
%
To prevent any potential confusion, we will avoid vectorising in this subsection and work directly with superoperators.

As in the proof of Proposition~\ref{prop:counting_dimensions}, we wish to adopt the notation $F_i$ ($i = 1, \dots, d^2-1$) for a Hilbert-Schmidt orthonormal set of hermitian and traceless operators. The set becomes a basis of hermitian operators over $\mathbb C^d$ by appending $F_0 \equiv \1 / \sqrt{d}$. Without loss of generality, we will henceforth take $F_i = c_d s_i$ for $i=1,2,3$, where $c_d$ is a constant to ensure normalization. This is always possible since $\braket{s_i,s_j} = 0$.

Firstly, let us show that the completely depolarizing channel $\mathcal T$ (a perfect channel, thus also 4-way unital) can be expressed as an affine combination of dual-unitary gates. To see this, notice that it factorizes $\mathcal T = \mathcal T_A \otimes \mathcal T_B$ and each of the $T_{A/B}$ is unital. Moreover, unital quantum channel can be expressed as affine combinations of unitary gates~\cite{mendl2009unital}. We can thus decompose
\begin{align}
    \mathcal T_A = \sum_{i} \lambda_i V_{1}^i (\cdot) V_{1}^{i \dagger}, \quad \sum_i 
    \lambda_i = 1
\end{align}
and similarly for $\mathcal T_B$. Combining these facts with the trivial observation that swapping the two subsystems after depolarizing has no effect, i.e., $\mathcal T = \mathcal S \mathcal T$ for $\mathcal S = S (\cdot) S$, we reach the conclusion that there exists a decomposition
\begin{align}
    \mathcal T = \sum_{ij} \lambda_{i} \lambda_{j}  S (V_1^i \otimes V_2^j) (\cdot)  (V_1^{i \dagger} \otimes V_2^{j \dagger}) S
\end{align}
which is an affine combination of dual-unitary gates with $J_k = 0$. As an example for $d=2$,
\begin{align*}
    \mathcal T = \mathcal T_A \otimes \mathcal T_B = \frac{1}{16} \sum_{a,b =0}^3 (\sigma_a \otimes \sigma_b)  S (\cdot) S (\sigma_a \otimes \sigma_b),
\end{align*}
where $\sigma_i$ denote the Pauli matrices and $\sigma_0 = \1$.

Following the proof of Proposition~\ref{prop:counting_dimensions}, we expand the quantum channel $\mathcal E$ as
\begin{align} 
    \mathcal E = \mathcal T + \sum_{ijkl} \alpha_{kl,ij} \mathcal F_{kl,ij} .
\end{align}
Therein we have also shown that imposing 4-way unitality over $\mathcal E$ constrains the allowed nonzero $\alpha$ coefficients to be of the following types:
\begin{enumerate}[(a)]
    \item  None of the $(ijkl)$ indices is zero.
    \item Either $i=l=0$ or $j = k =0$ and the rest of the indices are nonzero.
    \item Exactly one of the $(ijkl)$ indices is zero.
\end{enumerate}
The rest of the proof consists of constructing decompositions
\begin{align} \label{eq:expansion_dual_unitary}
    \mathcal F = \sum_k \lambda_k U_k(J_k) (\cdot) U_k(J_k)^\dagger , \; \sum_k \lambda_k = 0 , \; \lambda_k \in \mathbb R 
\end{align}
for every $\mathcal F$ belonging in one of the allowed three classes.


We will show momentarily that elements in classes (a) and (b) can be constructed with $J_k = 0$, i.e., by varying only the local part of the dual unitaries~\eqref{eq:dual_unitary_param} and the coefficients $\lambda_k$. On the other hand, for elements in class (c) we will invoke a nontrivial $J_k$. 
\\ \\
\textbf{(a)} Consider a single qudit and for $k,l \ge 1$ define the superoperator
\begin{align}\label{eq:Gkk}
    \mathcal G_{kl} & \equiv F_k \Tr[(\cdot) F_l]  \,.
\end{align}
Since every unital channel can be expressed as an affine combination of unitaries, then necessarily $\mathcal G_{kl}$ admits a decomposition over unitary gates with the corresponding coefficients satisfying $\sum_i \lambda_i = 0$. Such decompositions can also be easily made explicit via unitary 1-designs~\cite{bengtsson2017geometry}. Based on that, we can now directly write
\begin{align*}
    \mathcal F_{kl,ij}  = \mathcal S (\mathcal G_{li} \otimes \mathcal G_{jk}) 
\end{align*}
which, by the previous consideration, results in a linear combination of the form~\eqref{eq:expansion_dual_unitary} with $J_k=0$ dual-unitary gates.

For instance, in the case of qubits,
\begin{align}
    \mathcal G_{kk}  = \frac{1}{2} \big[ \sigma_0 (\cdot) \sigma_0 + \sigma_k (\cdot) \sigma_k - \sum_{\substack{a \ne k \\ a \ge 1}} \sigma_a (\cdot) \sigma_a \big] \;,
\end{align}
while the more general case $\mathcal G_{kl}$ can be obtained by an additional conjugation with a unitary.
\\ \\
\textbf{(b)} Using a similar reasoning, for $i,l \ge 1$ we have 
\begin{align*}
    \mathcal F_{0l,0j}  = \mathcal S ( \mathcal G_{lj} \otimes \mathcal T_B)
\end{align*}
and 
\begin{align*}
    \mathcal F_{k0,i0} = \mathcal S ( \mathcal T_A \otimes \mathcal G_{ki} )
\end{align*}
which, using the expansions above, gives rise to form~\eqref{eq:expansion_dual_unitary}, again with $J_k = 0$.
\\ \\
\textbf{(c)} Firstly, let us illustrate the idea behind the construction with an explicit example for qubits. We define
\begin{align*}
    V_{33} =  \1 \otimes \1 + 2 i \epsilon \sigma_3 \otimes \sigma_3  ,
\end{align*}
which corresponds to infinitesimal rotations generated by the nonlocal part of Eq.~\eqref{eq:dual_unitary_param} ($J_k \propto \epsilon \ll 1$). In particular, we can form the linear combination of (adjointly acting) dual-unitary gates
\begin{align*}
    & \;\quad V_{33}  S (\cdot) S V_{33}^\dagger  - S (\cdot) S  \nonumber \\ &= 2i \epsilon [(\sigma_3 \otimes \sigma_3) S (\cdot) S -  S (\cdot) S (\sigma_3 \otimes \sigma_3)] + \mathcal{O}(\epsilon^2)
    .
\end{align*}
As it can be verified from the last equation, we can map
\begin{align*}
    \sigma_3 \otimes \sigma_2 \mapsto  2 \epsilon \sigma_1 \otimes \1 \; ,
\end{align*}
which allows turning a single Pauli to an identity. This is exactly the required ingredient for class (c), which stems from the nonlocal part of the dual-unitary gate. 

For arbitrary $d$, since $F_3 \propto s_3$, we define
\begin{align*}
     V_{33} \equiv \1 \otimes \1 + i \epsilon F_3 \otimes F_3
\end{align*}
and
\begin{align}\begin{split}
    \mathcal J  &\equiv V_{33}  S (\cdot)S V_{33}^\dagger  -  S(\cdot)S  
    \\ &=i \epsilon [(F_3 \otimes F_3),   S(\cdot)S  ] + \mathcal{O}(\epsilon^2) \,.\end{split}
\end{align}
Notice that since $V_{33} S$ is dual unitary up to order $\epsilon^2$.
$\mathcal J $ maps traceless operator $s_3 \otimes s_2$ to a non-traceless one:
\begin{align*}
    s_3 \otimes s_2 \mapsto   \epsilon s_1 \otimes s_3^2 \; .
\end{align*}
Crucially, $\Tr (s_3^2) > 0 $, so up to linear combinations from (a), we obtained $F_{10,32}$. We combine this with single qudit rotations to obtain all elements  $\mathcal F _{l0,ij}$.
In particular, we first map $F_i$ to $F_3$ and $F_j$ to $F_2$, apply $\mathcal J$ and map $F_1$ to $F_l$. Therefore, the following superoperator
\begin{align*}
    \mathcal L_{l, ij} \equiv ( \mathcal G_{l 1} \otimes \mathcal \1 ) \mathcal J  (\mathcal G_{3i} \otimes \mathcal G_{2 j})
\end{align*}
gives us $\mathcal F _{l0,ij}$ up to linear combinations of elements from (a).

We are left to show that $\mathcal L_{l,ij}$ is a linear combination of dual-unitary gates of the form~\eqref{eq:expansion_dual_unitary} corresponding to an infinitesimal rotation of the nonlocal part, with varying local unitaries. 
It is proportional to the difference of the unitary channel obtained from the dual-unitary matrices
\begin{align*}
&(W_1 \otimes W_2)  \exp \left(i \epsilon s_3 \otimes s_3\right) S (V_1 \otimes V_2),\\
&(W_1 \otimes W_2) S (V_1 \otimes V_2),
\end{align*}
where the single qudit gates are specified by $V_1 (F_i)  V_1^\dagger= F_3$, etc. 

The rest of the $\mathcal F$ operators are constructed similarly (e.g., through Hilbert-Schmidt adjoints). 
Mark that the above construction fails for components forbidden by the expansion of 4-way unital quantum channels, e.g., for $\mathcal F _{00,jl}$.

\subsection{Proof of Proposition~\ref{prop:qubit_noise} }
\label{app:averaging}

The quantum channel $\mathcal E_p$ defined in Eq.~\eqref{eq:convex_channel} is by construction a convex combination of unitaries, thus properties (i), (ii) follow automatically.

To show the space unitality properties (iii) and (iv), we will first evaluate the exponential in the standard decomposition~\eqref{eq:standard_decomposition}. This results to 
\begin{align} \label{eq:app:standard_decomp2}
 U_{\lambda} = W_1 \otimes W_2  \left( \sum_{j=0}^3 \frac{\alpha_j}{2} \sigma_j \otimes \sigma_j \right) V_1 \otimes V_2 
\end{align}
where the parameters $\alpha_j$ satisfy
\begin{subequations} \label{eq:app:trig_coeff}
\begin{align}
    \!\!\!\frac{1}{4} \left| \alpha_0 \right|^2 \!&=\!  \prod_i \cos^2\theta_i + \prod_i \sin^2\theta_i  , \\
    \!\!\!\frac{1}{4} \left| \alpha_j \right|^2  \!&= \! \sin^2\theta_j \!\prod_{i\ne j}\! \cos^2\theta_i \!+\! \cos^2\theta_j \!\prod_{i \ne j}\! \sin^2\theta_i ,  j >0 .
\end{align}
\end{subequations}
Remember that $\lambda$ captures all parameters $\theta_i$ and also the parameters of the single site gates $W_{\chi}, V_{\chi}$.

Our strategy will be the following. First we will show that space unitality of $\mathcal E_p$ can be rephrased as
\begin{align} \label{eq:app:space_unital_noise}
 \int  d\lambda  \, p(\lambda) \sum_{j=0}^3 \frac{\left| \alpha_j \right|^2}{4}  W_\chi \sigma_j V_\chi (\cdot)  V^{\dagger}_\chi \sigma_j W^\dagger_\chi  =  \mathcal T 
\end{align}
%
%
where $\chi = 1,2$ corresponds to right and left unitality, respectively, and $\mathcal T =  \frac{\1}{2} \Tr (\cdot)$ is the completely depolarizing channel. Note that this does not mean that the averaged channel $\mathcal E_p$ is completely depolarising.
Then we will prove the validity of Eq.~\eqref{eq:app:space_unital_noise} using our two assumptions. In particular, using Assumption~1, we can rewrite~\eqref{eq:app:space_unital_noise} in vectorised notation as
\begin{align} \label{eq:app:space_unital_noise2}
&\!\!\sum_{j = 0}^3 \sigma_j \otimes \sigma_j^*  = \\ 
&\!\!= \! \int \! d\lambda_W d\lambda_V  \, p_{WV}(\lambda_W,\lambda_V)\! (W^*_\chi \otimes W_\chi)  Q  (V_\chi \otimes V^{*}_\chi) \notag
\end{align}
where
\begin{align}
    Q \!\equiv \!\int\! d \theta_1 d\theta_2 d\theta_3 \, p_{\theta}(\theta_1,\theta_2,\theta_3) \left( \sum_{j=0}^3 \left| \alpha_j \right|^2 \sigma_j \otimes \sigma_j \right)\!,
\end{align}
and we used the fact that $\mathcal T = \frac{1}{4} \sum_{j = 0}^3 \sigma_j (\cdot) \sigma_j$.
Subsequently, we will show that Assumption~2 guarantees that
\begin{align} \label{eq:app:q}
    Q = \sum_{j = 0}^3 \sigma_j \otimes \sigma_j^* \,.
\end{align}
This fact implies that $Q$ itself is proportional to a (vectorised) depolarizing channel, and thus $(W \otimes W^*) Q (V \otimes V^*) = Q$ for arbitrary unitaries $V,W$. As a result, Eq.~\eqref{eq:app:space_unital_noise2} holds true for both $\chi = 1,2$, regardless of the distribution $p_{WV}$.

To conclude, let us now show the two missing pieces of the proof.
\newline

\subsubsection*{Proof that right/left unitality of $\mathcal E_p$ reduces to Eq.~\eqref{eq:app:space_unital_noise}}

The condition for right unitality of $\mathcal E_p$ can be graphically represented as
\begin{align}
\int
d{\lambda} p({\lambda})
\begin{tikzpicture}[baseline=(current  bounding  box.center), scale=.7]
\def\dx{0.55}
\def\dy{0.55}
\draw[ thick] (-4.75+\dx,0.95+\dy) -- (-4.75+\dx,-0.95+\dy);
\draw[ thick] (-3.75+\dx,-0.95+\dy) -- (-3.75+\dx,0.95+\dy);
\draw[ thick, fill=myred, rounded corners=2pt] (-5+\dx,0.35+\dy) rectangle (-3.5+\dx,-0.35+\dy);
\Text[x=-4.25+\dx,y=\dy]{${U_{ \lambda }}^*$}
\draw[ thick] (-4.75,0.95) -- (-4.75,-0.95);
\draw[ thick] (-3.75,-0.95) -- (-3.75,0.95);
\draw[ thick, fill=myblue, rounded corners=2pt] (-5,0.35) rectangle (-3.5,-0.35);
\Text[x=-4.25,y=0]{$U_{ \lambda }$}
\draw[ thick] (-3.75,0.95) -- (-3.75+\dx,0.95+\dy);
\draw[ thick] (-3.75,-0.95) -- (-3.75+\dx,-0.95+\dy);
\end{tikzpicture}
 = 
 \begin{tikzpicture}[baseline=(current  bounding  box.center), scale=.7]
\def\dx{0.55}
\def\dy{0.55}
\draw[ thick] (-3.75,0.95)--(-3.75,0.5) -- (-3.75+\dx,0.5+\dy)--(-3.75+\dx,0.95+\dy);
\draw[ thick] (-3.75,-0.95)--(-3.75,-0.5) -- (-3.75+\dx,-0.5+\dy)--(-3.75+\dx,-0.95+\dy);
\end{tikzpicture}
\end{align}
Using the standard decomposition~\eqref{eq:app:standard_decomp2} for $U_\lambda$, the LHS of the graphical equation above evaluates to
\begin{align*}
      &\int d\lambda \, p(\lambda) 
      \\
      &\sum_{j,k = 0}^3 \frac{\alpha_j \alpha_k^*}{4} (V_1 \otimes V_1^*) \sigma_j \otimes \sigma_k^* (W_1 \otimes W_1^*) \Tr(\sigma_j \sigma_k) \\ 
      &= \int d\lambda \, p(\lambda) \sum_{j = 0}^3 \frac{\left| \alpha_j \right|^2 }{2} (V_1 \otimes V_1^*) \sigma_j \otimes \sigma_j^* (W_1 \otimes W_1^*) 
\end{align*}
where in the first step we used the graphical rule
\begin{align}
    \begin{tikzpicture}[baseline=(current  bounding  box.center), scale=.7]
\draw[ thick,  rounded corners=2pt] (0,0) rectangle (1,1);
\Text[x=.5,y=.5]{$\sigma_j$}
\draw[ thick,  rounded corners=2pt] (2,0) rectangle (3,1);
\Text[x=2.5,y=.5]{${\sigma_k}^*$}
\draw[ thick] (.5,1) -- (.5,1.5) -- (2.5,1.5)--(2.5,1);
\draw[ thick] (.5,0) -- (.5,-.5) -- (2.5,-.5)--(2.5,0);
\end{tikzpicture} = \Tr(\sigma_j \sigma_k) = 2 \delta_{jk} \,.
\end{align}
The analogous condition holds for left unitality by replacing $W_1 \mapsto W_2$ and $V_1 \mapsto V_2$. Passing from the vectorised notation used here to superoperator notation, ones recovers Eq.~\eqref{eq:app:space_unital_noise}.

\subsubsection*{Proof that Assumption~2 implies Eq.~\eqref{eq:app:q}}

We need to show that
\begin{align} \label{eq:app:betas}
     \int d\theta_1 d\theta_2 d\theta_3 \left| \alpha_j \right|^2 = 1  \quad \forall j = 0,1,2,3 \,.
\end{align}
Substituting $\theta_j = \pi/4 + \delta_j$ ($j = 1,2$) one gets
\begin{widetext}
\begin{subequations}
\begin{align}
    \left| \alpha_0 \right|^2 &=  (1 - \sin 2 \delta_1 ) (1 - \sin 2 \delta_2) \cos^2  \theta_3  +  (1 + \sin 2 \delta_1) (1 + \sin 2 \delta_2) \sin^2  \theta_3   \\
    \left| \alpha_1 \right|^2 & =   \sin 2 \delta_1 \cos 2 \theta_3 - \sin 2 \delta _2 \left( \sin 2 \delta _1 + \cos
   2 \theta_3 \right) + 1   \\
   \left| \alpha_2 \right|^2 &=  \left(\sin 2 \delta_2 -\sin 2 \delta_1\right) \cos 2 \theta_3 - \sin 2 \delta _1  \sin 2 \delta_2  + 1    \\
   \left| \alpha_3 \right|^2 &=  \left(1 - \sin 2 \delta_1 \right) \left( 1 - \sin 2 \delta _2 \right) \sin ^2 \theta _3 + \left( 1 + \sin 2 \delta _1 \right) \left( 1 + \sin 2 \delta_2 \right) \cos ^2 \theta_3   \,.
\end{align}
\end{subequations}
\end{widetext}
From the above expressions it is easy to see that Eqs.~\eqref{eq:app:betas} are all satisfied if
\begin{subequations}
\begin{align}
   \! \!\int\!\! d\delta_1 d\delta_2 d\theta_3 p_{\theta} (\frac{\pi}{4}\!+\! \delta_1,\frac{\pi}{4} + \delta_2,\theta_3) f(\delta_2, \theta_3) \sin 2 \delta_1 &\!=\! 0
\end{align}
and
\begin{align}
    \!\!\int\!\! d\delta_1 d\delta_2 d\theta_3 p_{\theta} (\frac{\pi}{4} \!+\! \delta_1,\frac{\pi}{4} \!+\! \delta_2,\theta_3)  g(\delta_1, \theta_3) \sin 2 \delta_2 &\!=\! 0
\end{align}
\end{subequations}
which is guaranteed by the symmetry property of Assumption~2.

\section{Details on the structure of space-time channels} \label{sec:app_structure}
Here we provide some more technical details about the discussion of \ref{sec:dimension_counting}. We are looking at the Frobenious distance to the boundary of the convex set measured from the completely depolarising map $\mathcal{T}$.

Specifically, we consider super-operators of the form $\mathcal E = \mathcal T + \lambda \mathcal E'$, where $\mathcal E'$ is hermiticity-preserving, maintains 4-way unitality and is orthogonal\footnote{Here we implicitly use that the channels are represented by matrices, and use the Hilbert-Schmidt inner product.} to $\mathcal T$.
We construct $\mathcal E'$ by choosing its unconstrained elements to be random real numbers between minus one and one, and then normalise it using the Frobenius norm so $\left\| \mathcal E' \right\|_2 = 1$.
Increasing the strength $\lambda$, we observe that complete positivity fails at some typical distance, 
$\lambda \approx 0.6\pm 0.2$ for $d=2$, in all of $10000$ random samples\footnote{The observation is true for any $n$-way unitality, $n=1, \ldots,6$.}. 

The Frobenious norm of dual-unitary channels ($q_{DU}=U_{DU} \otimes U_{DU}^*$) is $4$, and its $00,00$ element in the Pauli basis is $1$, therefore the distance from $\mathcal{T}$ to the dual-unitary channels is $\sqrt{15}\approx 3.8$, which is considerably more than in the random direction.

Said differently, starting from the completely depolarising channel, the directions towards the dual-unitary channels are very fine-tuned. In those directions complete positivity fails considerably later than in random directions.

\end{document}